\documentclass[twocolumn, aps, pra,prl, superscriptaddress, preprintnumbers, letterpaper]{revtex4-1}

\usepackage{amsthm}
\usepackage{amsmath,bm}
\usepackage{amssymb}
\usepackage{braket}
\usepackage{bbold}
\usepackage{graphicx}
\usepackage{subfigure}
\usepackage{hyperref}
\usepackage{appendix}
\usepackage{tikz}
\usepackage{color}
\usepgflibrary{shapes.geometric}
\usetikzlibrary{arrows}

\hypersetup{colorlinks=true, citecolor=blue, urlcolor=blue, linkcolor=blue}

\newcommand{\nc}{\newcommand}
\nc{\tr}{{\operatorname{Tr}}}
\newtheorem{theorem}{Theorem}

\newtheorem{definition}[theorem]{Definition}

\newtheorem{proposition}[theorem]{Proposition}
\usepackage{natbib}
\begin{document}
\title{Multipartite entanglement measures based on the thermodynamic framework}
\author{Chen-Ming Bai}
\email{baichm@stdu.edu.cn}
\affiliation{Department of Mathematics and Physics, Shijiazhuang Tiedao University,
Shijiazhuang, 050043, China}
\author{Yu Luo}
\email{luoyu@snnu.edu.cn}
\affiliation{College of Computer Science, Shaanxi Normal University, Xi'an, 710062, China}

\bigskip

\begin{abstract}
In this work, we introduce a unified method to characterize and measure multipartite entanglement using the framework of thermodynamics. A family of the new entanglement measures is proposed: \textit{ergotropic-gap concentratable entanglement}. Furthermore, we establish that ergotropic-gap concentratable entanglement constitutes a well-defined entanglement measure within a specific parameter regime, satisfying key properties including continuity, majorization monotonicity and monogamy. We demonstrate the utility of this measure by showing it effectively distinguishes between multi-qubit Greenberger-Horne-Zeilinger  states and W states. It also proves effective in detecting entanglement in specific classes of four-partite star quantum network states.
\end{abstract}

\pacs{03.67.a, 03.65.Ud}

\maketitle

\section{Introduction}
Quantum entanglement, an indispensable resource, offers key technologies like quantum cryptography \cite{yin2020entanglement,gisin2002quantum,zeng2023controlled,portmann2022security}, quantum teleportation \cite{pirandola2015advances,karlsson1998quantum,hu2023progress}, and quantum secret sharing \cite{hillery1999quantum,singh2024controlled}. Moreover, advances in quantum control technologies have expanded the utility of entanglement to the demonstration of a quantum advantage in near-term devices \cite{king2025beyond}, with its performance often critically dependent on the degree of entanglement within the state. While bipartite entanglement is well-characterized by measures including concurrence \cite{hill1997entanglement,xuan2025new,zhou2025parameterized}, entanglement of formation \cite{wootters1998entanglement,wootters2001entanglement}, and crucially, thermodynamically-inspired entropy-based measures like R{\'e}nyi-$\alpha$ \cite{wang2016entanglement,liang2020monogamy}, Tsallis-$q$  \cite{kim2010tsallis,luo2016general} and Unified-$(q,s)$ entropy entanglement \cite{san2011unified,li2024monogamy}, multipartite entanglement remains less comprehensively understood. This gap arises from rapidly increasing structural complexity as subsystems grow, exemplified by the fundamental stochastic local operations and classical communication (LOCC) \cite{dur2000three}---inequivalence of Greenberger-Horne-Zeilinger (GHZ) \cite{greenberger1990bell}  and W states in tripartite systems.

Recently, quantum thermodynamics has emerged as a rapidly evolving discipline, offering promising avenues to bridge the fundamental gap between quantum mechanics and thermodynamics \cite{brandao2008entanglement,plenio1998teleportation,kaufman2016quantum}. Critically, a profound connection has been identified between thermodynamic quantities and quantum entanglement \cite{oppenheim2002thermodynamical,puliyil2022thermodynamic}. In particular, composite quantum systems have been extensively explored as potential work storage devices \cite{puliyil2022thermodynamic, delmonte2021characterization,mukherjee2016presence,alicki2013entanglement,alimuddin2019bound,alimuddin2020independence,vsafranek2023work}, leveraging the inherent resource of quantum entanglement. Crucially, the maximum extractable work from such composite systems is demonstrably enhanced when global operations across the entire system are permitted, compared to scenarios restricted to local operations on individual subsystems. This operational advantage underscores the thermodynamic relevance of entanglement. Moving beyond traditional entanglement quantifiers based on entropy, novel thermodynamic-inspired measures have been proposed to characterize entanglement. Puliyil et al.\cite{puliyil2022thermodynamic} introduced the ergotropic gap, while Yang et al.\cite{yang2023battery} defined the battery capacity gap.

Remarkably, these thermodynamic quantities not only witness the presence of bipartite entanglement but also serve as indicators of genuine multipartite entanglement (GME). This establishes a direct operational link between thermodynamic performance and entanglement structure.
The connection between thermodynamics and entanglement quantification deepened significantly in 2024. Yang et al.\cite{yang2024multiparticle,yang2024characterizing} established a rigorous relation between the geometric measure of entanglement and ergotropic gaps. Concurrently, Sun et al.\cite{sun2024genuine} introduced a vector-valued measure for GME, exploiting the battery capacity framework within multipartite settings. These developments highlight thermodynamics as a potent framework for deriving entanglement measures. However, the above research results lack a unified thermodynamic framework to characterize and quantify entanglement.

In this work, we introduce and characterize a novel family of multipartite entanglement measures for mixed states, denoted by $M^{(s)}_E$ or $M^{(s)}_{B}$. These measures are derived from fundamental quantum thermodynamic concepts: the ergotropic gap and the battery capacity gap. We refer to these measures as ergotropic-gap concentratable entanglement and battery capacity-gap concentratable entanglement. The concentratable entanglement framework \cite{beckey2021computable} provides a flexible tool for analyzing entanglement distribution across different partitions. Crucially, these measures establish a direct quantitative link between the operational resource of entanglement and thermodynamic work extraction capabilities in quantum systems.
Furthermore, we rigorously prove that within systems governed by Hamiltonians featuring equispaced energy levels, these two entanglement measures are equivalent. After that, we establish that $M^{(s)}_E$ satisfies essential axioms for a valid entanglement measure: it is non-increasing on average under LOCC and exhibits continuity.  Leveraging this thermodynamic perspective, we employ the
ergotropic-gap concentratable entanglement to develop a sufficient criterion for certifying GME in three-qubit systems. The criterion directly connects the impossibility of achieving certain work extraction benchmarks under LOCC to the presence of genuine multipartite correlations. We also explore the monogamy relations satisfied by $M^{(s)}_E$ and investigate its relationships with other established entanglement measures. Finally, we investigate the ability
of $M^{(s)}_E$ to distinguish between GHZ and W states. Our
results show that this measure provides a clear distinction
between these two inequivalent classes of multipartite
entangled states. Intriguingly, in specific four-partite star-shaped quantum network configurations, we find that in some special
cases the ergotropic-gap concentratable entanglement is higher than the original entanglements.

This paper is organized as follows. In Sec.~\ref{sec:Preliminary}, we introduce some preliminaries and give two definitions-ergotropic-gap concentratable entanglement and battery capacity-gap concentratable entanglement. In Sec.~\ref{sec:mainresults}, we provide our main results.  In Sec.~\ref{sec:example}, we investigate the
ability of $M^{(s)}_E$ to distinguish between GHZ and W states,
and study the example of a four-partite star quantum
network.  We summarize our results in Sec.~\ref{sec:Conclusion}.

\section{Preliminary} \label{sec:Preliminary}
\subsection{The ergotropic gap and battery capacity gap}
The study of work extraction from isolated quantum systems via cyclic Hamiltonian processes dates to the late 1970s (\cite{pusz1978passive,lenard1978thermodynamical}). Consider a finite-dimensional system initialized in the state $\rho$ on Hilbert space $\mathcal{H}$. Its time evolution is driven by external modulations of the Hamiltonian  $H=\sum_j\epsilon_j\ket{\epsilon_j}\bra{\epsilon_j}$ with nondegenerate eigenvalues ordered such that $\epsilon_j\leq\epsilon_{j+1}$ for $j=0,1,\cdots,d-1$, and where the ground-state energy satisfies $\epsilon_0=0$ with $\epsilon_1>0$. The maximal extractable work from the initial state $\rho$, known as \textit{ergotropy} \cite{allahverdyan2004maximal}, is then defined as
\begin{equation}
    W_e(\rho)=\tr(\rho H)-\tr(\rho^p H),
\end{equation}
where \(\rho^p\) is the corresponding passive state of \(\rho\) and is given by $\rho^p=\sum_j\lambda_j\ket{\epsilon_j}\bra{\epsilon_j}$ with $\lambda_j\geq\lambda_{j+1}$.

Similarly, we define the \textit{anti-ergotropy} \cite{yang2023battery} as the minimal extractable work from the system through cyclic processes, expressed as
\begin{equation}
    W_{ae}(\rho)=\tr(\rho H)-\tr(\rho^{ac} H),
\end{equation}
where \(\rho^{ac}\) denotes the active state of \(\rho\) that maximizes the mean energy $\langle H\rangle$.

The \textit{battery capacity} of the system is defined as \cite{yang2023battery}
\begin{equation}
  \mathcal{C}(\rho)=W_{e}(\rho)-W_{ae}(\rho)=\tr(\rho^{ac} H)-\tr(\rho^{p} H).
\end{equation}
This capacity quantifies the difference between the maximum and minimum achievable mean energies under unitary evolution, equaling the difference between the ergotropy and anti-ergotropy of \(\rho\).

Furthermore, for a bipartite state $\rho_{AB}\in\mathcal{D}(\mathcal{H}_A\otimes\mathcal{H}_B)$, the global Hamiltonian is $H_{AB}=H_A\otimes \mathbb{I}_B+\mathbb{I}_A\otimes H_B$, where $\mathbb{I}_X$ is the identity operator on $\mathcal{H}_X$.  Each local Hamiltonian has spectral decomposition
$H_X=\sum_j\epsilon_j^{X}\ket{\epsilon_j^{X}}\bra{\epsilon_j^{X}}\ (X=A,B)$ with eigenvalues ordered as $\epsilon^X_j\leq\epsilon^X_{j+1}$.
The global ergotropy (maximal work extractable via joint operations) is defined as
\begin{equation}
    W_e^g(\rho_{AB})=\tr(\rho_{AB}H_{AB})-\tr(\rho^p_{AB}H_{AB}),
\end{equation}
where \(\rho^p_{AB}\) is the global passive state of \(\rho_{AB}\). Conversely, the local ergotropy (maximal work via local operations) is
\begin{eqnarray}
 W_e^l(\rho_{AB})&=&W_e^A(\rho_{AB})+W_e^B(\rho_{AB})\\
&=&\tr(\rho_{AB}H_{AB})-\tr(\rho^p_{A}H_{A})-\tr(\rho^p_{B}H_{B}),  \nonumber
\end{eqnarray}
where \(\rho^p_{X}\) is the local passive states. The \textit{ergotropic gap}, quantifying the quantum advantage of global operations, is \cite{alimuddin2019bound,puliyil2022thermodynamic,yang2024multiparticle}
\begin{eqnarray}
\Delta_{A|B}(\rho_{AB})&=&W_e^g(\rho_{AB})-W_e^l(\rho_{AB})\\
&=&\tr(\rho^p_{A}H_{A})+\tr(\rho^p_{B}H_{B})
-\tr(\rho^p_{AB}H_{AB}). \nonumber
\end{eqnarray}

For pure states \(\rho_{AB}=\ket{\psi_{AB}}\bra{\psi_{AB}}\), this simplifies to
\begin{equation}
\label{eq:Delta}
    \Delta_{A|B}(\ket{\psi_{AB}})
=\tr(\rho^p_{A}H_{A})+\tr(\rho^p_{B}H_{B}),
\end{equation}
which means that \(\Delta_{A|B}(\ket{\psi_{AB}})\) represents the total extractable energy, derived from measuring the passive states of each subsystem $A$ and $B$.

Next, the bipartite battery capacity gap \(\Delta^{cg}_{A|B}(\rho_{AB})\) is the difference between  global battery capacity $\mathcal{C}^g(\rho_{AB})$ and local battery capacity $\mathcal{C}^l(\rho_{AB})$ \cite{yang2023battery}. Therefore,  the \textit{bipartite battery capacity gap} is denoted by
\begin{eqnarray}
\label{eq:Deltacg}
   \Delta^{cg}_{A|B}(\rho_{AB})&=&\mathcal{C}^g(\rho_{AB})-\mathcal{C}^l(\rho_{AB})\nonumber\\
   &=&\mathcal{C}(\rho_{AB})-\mathcal{C}(\rho_{A})-\mathcal{C}(\rho_{B}).
\end{eqnarray}
\subsection{The ergotropic-gap concentratable
entanglement and battery capacity-gap concentratable
entanglement}

In this section,  we introduce multipartite entanglement measures. A function
 $E^{(m)}: \mathcal{S}^{A_1A_2\cdots A_m} \rightarrow \mathbb{R}_+$ is called an $m$-partite entanglement measure if it satisfies the following criteria \cite{hong2012measure,hiesmayr2008multipartite,horodecki2009quantum}:
\begin{itemize}
    \item (E1) $E^{(m)}(\rho)=0$ if $\rho$ is fully separable.
    \item (E2) $E^{(m)}$ cannot increase under $m$-partite LOCC.
\end{itemize}
In addition, $E^{(m)}$ is said to be an $m$-partite entanglement monotone if it is convex and does not increase on average under $m$-partite stochastic LOCC.

Recent studies demonstrate that the ergotropic gap and battery capacity gap serve as effective entanglement quantifiers \cite{puliyil2022thermodynamic,mukherjee2016presence,yang2023battery,yang2024multiparticle,sun2024genuine}. We now generalize these concepts to multipartite systems by introducing two families of the new entanglement measures: \textit{the ergotropic-gap concentratable entanglement measures} and \textit{the battery capacity-gap concentratable entanglement measures}.

\begin{definition}[\textit{ergotropic-gap
 concentratable entanglement}]
For an $n$-qubit pure state $\ket{\psi}$ and a subset $s \subseteq [n]$ of the qubits, the ergotropic-gap concentratable entanglement for pure states is defined as
\begin{equation}
M^{(s)}_E(|\psi\rangle) = \frac{1}{2^{|s|}} \sum_{X \in \mathcal{P}(s)}\Delta_{X|X^c}(|\psi\rangle),
\end{equation}
where $\mathcal{P}(s)$ denotes the power set of $s$, $X^c$ denotes the complement of the subsystem $X$, and $\Delta_{X|X^c}(|\psi\rangle)=0$ when $X = \emptyset$.
For mixed states \(\rho\), the measure extends via convex roof construction:
\begin{equation}
\label{eq:deff1}
M^{(s)}_E(\rho) =\min\sum_ip_iM^{(s)}_E(\ket{\psi_i}) ,
\end{equation}
where the minimum is taken over all the pure state decomposition ${\{p_i,\ket{\psi_i}\bra{\psi_i}\}}$ of $\rho$.
\end{definition}

\begin{definition}[\textit{battery capacity-gap
 concentratable entanglement}]
For an $n$-qubit pure state $\ket{\psi}$ and a subset $s\subseteq [n]$ of the qubits, the battery capacity-gap concentratable entanglement for pure states is defined as
\begin{equation}
M^{(s)}_{B}(|\psi\rangle) = \frac{1}{2^{|s|}} \sum_{X \in \mathcal{P}(s)}\Delta^{cg}_{X|X^c}(|\psi\rangle),
\end{equation}
where $\mathcal{P}(s)$ denotes the power set of $s$,  $X^c$ denotes the complement of the subsystem $X$, and $\Delta_{X|X^c}(|\psi\rangle)=0$ when $X = \emptyset$.
For any mixed state \(\rho\), the battery capacity-gap concentratable entanglement is defined as
\begin{equation}
M^{(s)}_{B}(\rho) =\min\sum_ip_iM^{(s)}_{B}(\ket{\psi_i}) ,
\end{equation}
where the minimum is taken over all the pure state decomposition ${\{p_i,\ket{\psi_i}\bra{\psi_i}\}}$ of $\rho$.
\end{definition}

Furthermore, in a Hamiltonian with equispaced energy levels, the ergotropic-gap
 concentratable entanglement and the battery capacity-gap concentratable
entanglement are equivalent:
\begin{theorem}
For a system with equispaced energy level Hamiltonians $H_X=\sum_{j=0}^{d-1}j\epsilon^{X}\ket{\epsilon_j^{X}}\bra{\epsilon_j^{X}}$ and $H_{X^c}=\sum_{j=0}^{d-1}j\epsilon^{X^c}\ket{\epsilon_j^{X^c}}\bra{\epsilon_j^{X^c}}$, the battery capacity-gap and ergotropic-gap concentratable entanglement measures satisfy
\begin{equation}
 M^{(s)}_{B}(|\psi\rangle)=2M^{(s)}_E(|\psi\rangle),
\end{equation}
for any $n$-qubit pure state $\ket{\psi}$ and  qubit subset $s\subseteq [n]$.
\end{theorem}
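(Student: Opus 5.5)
Since both $M^{(s)}_B$ and $M^{(s)}_E$ are the same normalized sum over $X\in\mathcal{P}(s)$ (with the $X=\emptyset$ term vanishing in both), it suffices to prove the termwise identity
\begin{equation}
\Delta^{cg}_{X|X^c}(\ket{\psi})=2\,\Delta_{X|X^c}(\ket{\psi})
\end{equation}
for every nonempty $X\subseteq s$ and every pure $\ket{\psi}$, with $H_X$ and $H_{X^c}$ equispaced as in the statement. Summing and dividing by $2^{|s|}$ then gives the theorem.

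For the global term, $\rho=\ket{\psi}\bra{\psi}$ has spectrum $(1,0,\dots,0)$. Its passive state is the ground state of $H_{XX^c}$, with energy $0$. Its active state puts all weight on the highest level, with energy $\epsilon_{\max}(H_X)+\epsilon_{\max}(H_{X^c})=(d_X-1)\epsilon^X+(d_{X^c}-1)\epsilon^{X^c}$. Hence $\mathcal{C}(\ket{\psi})=(d_X-1)\epsilon^X+(d_{X^c}-1)\epsilon^{X^c}$.

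For a local term, let $\rho_X$ have eigenvalues $\lambda_0\ge\cdots\ge\lambda_{d_X-1}$. The passive energy is $\epsilon^X\sum_j j\lambda_j$ and the active energy is $\epsilon^X\sum_j(d_X-1-j)\lambda_j$. The key observation is the reflection symmetry of an equispaced spectrum. Adding the two energies gives
\begin{equation}
\tr(\rho_X^{ac}H_X)+\tr(\rho_X^{p}H_X)=(d_X-1)\epsilon^X\sum_j\lambda_j=(d_X-1)\epsilon^X,
\end{equation}
so
\begin{equation}
\mathcal{C}(\rho_X)=(d_X-1)\epsilon^X-2\tr(\rho^p_XH_X),
\end{equation}
and likewise for $X^c$. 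Substituting into Eq.~\eqref{eq:Deltacg}, the constants $(d_X-1)\epsilon^X+(d_{X^c}-1)\epsilon^{X^c}$ cancel against $\mathcal{C}(\ket{\psi})$. What remains is
\begin{equation}
\Delta^{cg}_{X|X^c}=2\bigl[\tr(\rho^p_XH_X)+\tr(\rho^p_{X^c}H_{X^c})\bigr]=2\Delta_{X|X^c}(\ket{\psi}),
\end{equation}
where the last equality is Eq.~\eqref{eq:Delta}.

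The main point needing care is the bookkeeping of the global Hamiltonian. One must check that the global maximal energy of $H_{XX^c}=H_X\otimes\mathbb{I}+\mathbb{I}\otimes H_{X^c}$ is exactly the sum of the local maxima. It is, because the eigenvalues of a sum of commuting terms on a tensor product add. The cancellation does not require the global Hamiltonian itself to be equispaced; it only uses the reflection identity locally on each subsystem, together with the fact that a pure state has a rank-one spectrum. Global degeneracies also cause no trouble, since only the extreme eigenvalues of $H_{XX^c}$ enter.
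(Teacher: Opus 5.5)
Your proof is correct and is essentially the paper's argument: the reflection identity $\tr(\rho_X^{ac}H_X)+\tr(\rho_X^{p}H_X)=(d_X-1)\epsilon^X$ on each side of the cut, combined with the rank-one spectrum of $\ket{\psi}$. You also make explicit a step the paper leaves tacit, namely that $\mathcal{C}(\ket{\psi})=(d_X-1)\epsilon^X+(d_{X^c}-1)\epsilon^{X^c}$. In addition, you allow $d_X\neq d_{X^c}$, which is the natural setting for $n$-qubit bipartitions, whereas the paper uses a single $d$ for both subsystems.
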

\begin{proof}
The passive and active states of subsystems
$X$ and $X^c$ admit spectral decompositions:
\begin{eqnarray}
    &&\rho_X^p=\sum_{j=0}^{d-1}\lambda^X_j\ket{\epsilon_j^{X}}\bra{\epsilon_j^{X}}, \rho_{X^c}^p=\sum_{j=0}^{d-1}\lambda^{X^c}_j\ket{\epsilon_j^{X^c}}\bra{\epsilon_j^{X^c}},\nonumber\\
    &&\rho_X^{ac}=\sum_{j=0}^{d-1}\eta^X_j\ket{\epsilon_j^{X}}\bra{\epsilon_j^{X}},\rho_{X^c}^{ac}=\sum_{j=0}^{d-1}\eta^{X^c}_j\ket{\epsilon_j^{X^c}}\bra{\epsilon_j^{X^c}},\nonumber
\end{eqnarray}
with eigenvalue relations  $\lambda^X_j=\eta^X_{d-1-j} $ and $ \lambda^{X^c}_j=\eta^{X^c}_{d-1-j}$. From these equalities above, we deduce that
\begin{eqnarray}
\label{eq:rhox}
    (d-1)\epsilon^{X}-\tr(\rho^{ac}_XH_X)&=&\sum_{j=0}^{d-1}\eta_j^{X}((d-1)\epsilon^{X}-j\epsilon^{X})\nonumber\\
    &=& \sum_{j=0}^{d-1}\lambda_{d-1-j}^{X}(d-1-j)\epsilon^{X}\nonumber\\
    &=& \tr(\rho_{X}^pH_X).
\end{eqnarray}
Using the same method, we can get
\begin{equation}
\label{eq:rhoxc}
   (d-1)\epsilon^{X^c}-\tr(\rho^{ac}_{X^c}H_{X^c})
    = \tr(\rho_{X^c}^pH_{X^c}).
\end{equation}
Adding Eq.(\ref{eq:rhox}) and Eq.(\ref{eq:rhoxc}), we can obtain that
\begin{eqnarray}
    \label{eq:rhoadd}
   && (d-1)\epsilon^{X}-\tr(\rho^{ac}_XH_X)+(d-1)\epsilon^{X^c}-\tr(\rho^{ac}_{X^c}H_{X^c})\nonumber\\
    &&=\tr(\rho_{X}^pH_X)+\tr(\rho_{X^c}^pH_{X^c}).
\end{eqnarray}
Furthermore, we can have that
\begin{eqnarray}
    \label{eq:rhoadd}
   && \quad(d-1)\epsilon^{X}-\tr(\rho^{ac}_XH_X)+\tr(\rho_{X}^pH_X)\nonumber\\
   &&+(d-1)\epsilon^{X^c}-\tr(\rho^{ac}_{X^c}H_{X^c})+\tr(\rho_{X^c}^pH_{X^c})\nonumber\\
    &&=2(\tr(\rho_{X}^pH_X)+\tr(\rho_{X^c}^pH_{X^c})).
\end{eqnarray}

By Eq.(\ref{eq:Delta}) and Eq.(\ref{eq:Deltacg}),  Eq.(\ref{eq:rhoadd}) is changed to
$\Delta^{cg}_{X|X^c}(|\psi\rangle)=2\Delta_{X|X^c}(|\psi\rangle)$. Thus, by averaging over all bipartitions $X \in \mathcal{P}(s)$, we can get that $M^{(s)}_{B}(|\psi\rangle)=2M^{(s)}_E(|\psi\rangle)$.
\end{proof}
After that, we establish that the ergotropic-gap concentratable entanglement $ M^{(s)}_E$ constitutes a valid multipartite entanglement measure by proving its adherence to two fundamental criteria, i.e.,
non-negative and non-increasing on average under multipartite LOCC protocols.
While analogous properties hold for the battery capacity-gap variant, we focus here on $ M^{(s)}_E$ for clarity.

\begin{theorem}\label{the:vNCE}
For any $n$-qubit state $\rho$ and given the subset $s\subseteq [n]$, the ergotropic-gap concentratable entanglement $M^{(s)}_E(\rho)$ is a multipartite entanglement measure under multipartite LOCC protocol,
i.e., $M^{(s)}_E(\rho)$ satisfies following properties:

(1)~\textbf{Non-negativity}: $M^{(s)}_E(\rho)\geq0$. Moreover, the equality holds if and only if
$\rho$ is fully separable, i.e., $\rho=\sum_ip_i\bigotimes_{j=1}^n\ket{\psi_{ij}}\bra{\psi_{ij}}$, where each $\ket{\psi_{ij}}$ is a single-qubit pure state.

(2)~\textbf{Monotonicity}: $M^{(s)}_E(\rho)$ is non-increasing on average under the multipartite LOCC protocols.
\end{theorem}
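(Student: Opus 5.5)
The plan is to reduce everything to a statement about each bipartite term $\Delta_{X|X^c}(\ket{\psi})$ and then pass to mixed states through the convex roof. For a pure state, Eq.~(\ref{eq:Delta}) gives $\Delta_{X|X^c}(\ket{\psi})=\tr(\rho^p_XH_X)+\tr(\rho^p_{X^c}H_{X^c})$. The reduced states $\rho_X$ and $\rho_{X^c}$ have the same nonzero spectrum (Schmidt decomposition), so each term is a function only of the Schmidt coefficients $\lambda_0\ge\lambda_1\ge\cdots$ across the cut $X|X^c$. Concretely,
\[
\tr(\rho^p_XH_X)=\sum_j\lambda_j\epsilon^X_j ,
\]
where the eigenvalues are sorted in decreasing order and the energies in increasing order. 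Since $\epsilon^X_0=0<\epsilon^X_1\le\cdots$, I will rewrite this as
\[
\sum_{k\ge1}(\epsilon^X_k-\epsilon^X_{k-1})\Bigl(1-\sum_{j<k}\lambda_j\Bigr).
\]
Each bracket $1-\sum_{j<k}\lambda_j$ is a Vidal-type monotone (a complement of a Ky Fan sum), and it is multiplied by a nonnegative weight. So $\Delta_{X|X^c}$ is a nonnegative combination of functions that are Schur-concave and concave in $\rho_X$.

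For \textbf{(1)}, nonnegativity of the pure-state quantity is immediate, and it carries over to the convex roof. If $\rho$ is fully separable, use the product decomposition: every $\rho_X$ is pure, so $\lambda_0=1$ and every term vanishes, giving $M^{(s)}_E(\rho)=0$. For the converse, suppose $M^{(s)}_E(\rho)=0$. By compactness the minimum is attained, so there is a decomposition in which every $\ket{\psi_i}$ has $\Delta_{X|X^c}(\ket{\psi_i})=0$ for all $X\in\mathcal P(s)$. Because $\epsilon_1>0$, the $k=1$ term forces $\lambda_0=1$, so $\ket{\psi_i}$ is product across each such cut. Taking the singletons $X=\{j\}$ with $j\in s$, I get that $\ket{\psi_i}$ factorizes each qubit in $s$.

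Here lies a subtlety, which I expect to be the main obstacle. Full separability requires the cuts to separate every qubit, whereas $\mathcal P(s)$ only contains cuts $X\subseteq s$, so vanishing of $M^{(s)}_E$ only shows separability with respect to the qubits in $s$. I would handle this by stating the equivalence for $s=[n]$, or more generally for $s$ whose cuts cover all qubits, perhaps with $|s|\ge n-1$ since $X^c$ is also separated. I would note this as the precise condition under which ``if and only if'' holds, because for a smaller $s$ the qubits outside $s$ may remain entangled with one another.

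For \textbf{(2)}, I would follow Vidal's strategy. The pure-state function $\Delta_{X|X^c}$ is a concave, unitarily invariant function $f(\rho_X)$ of the reduced state. The same holds with $X^c$ in place of $X$, because the spectra agree; this symmetry makes the argument valid for LOCC steps performed on either side. Any multipartite LOCC step is a local instrument on one qubit $A_j$. For every cut $X$, that qubit lies on one side, so the step is a local operation for the bipartition $X|X^c$. Concavity together with unitary invariance then gives $\sum_kp_kf(\rho_X^{(k)})\le f(\rho_X)$ when the operation acts on the complementary side, since the average of the outcome reduced states equals $\rho_X$. When the operation acts on $X$ itself, I use the equal-spectrum trick to move to the other side. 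Summing over $X\in\mathcal P(s)$ with weight $2^{-|s|}$ yields $\sum_kp_kM^{(s)}_E(\ket{\psi_k})\le M^{(s)}_E(\ket{\psi})$. The extension to mixed states follows the standard convex-roof argument. Apply the optimal decomposition, run the LOCC on each branch, and note that the post-measurement ensembles give decompositions of each outcome state $\rho_k$. The minimization in Eq.~(\ref{eq:deff1}) then gives $\sum_kp_kM^{(s)}_E(\rho_k)\le M^{(s)}_E(\rho)$.
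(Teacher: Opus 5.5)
Your proposal is correct, and on the key step it takes a different, more self-contained route than the paper. The paper never analyses the bipartite terms. It quotes Puliyil et al.\ for $\Delta_{X|X^c}(\ket{\psi})\geq\Delta_{X|X^c}(\Lambda(\ket{\psi}))$, sums over $X\in\mathcal{P}(s)$, and then runs the same three-stage convex-roof extension you describe (pure to pure, pure to ensemble, mixed to ensemble). Your Abel summation writes $\tr(\rho^p_XH_X)$ as a nonnegative combination of the tail sums $1-\sum_{j<k}\lambda_j$. This makes $\Delta_{X|X^c}$ a concave, unitarily invariant function of either reduced state, so Vidal's argument gives monotonicity \emph{on average} directly. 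That is exactly what the paper uses in Eq.~(\ref{eq:pure-mixed}), although the result it cites is stated only in the deterministic form. The same rearrangement is step (f) of the paper's majorization-monotonicity proof, so your decomposition delivers both properties at once. The paper's route is shorter only because it outsources the bipartite monotonicity.

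For part (1), the paper proves only that fully separable states give $M^{(s)}_E=0$ and never addresses the converse. Your caveat is right, and you can drop the ``perhaps'': the converse holds exactly when $|s|\geq n-1$. The condition suffices because factoring off $n-1$ single qubits leaves the last one factored as well. It is necessary because for $|s|\leq n-2$ you can put two qubits outside $s$ in a Bell pair and all others in product states. Then for every $X\subseteq s$ both $\rho_X$ and $\rho_{X^c}$ are pure, so $M^{(s)}_E=0$ for a state that is not fully separable. Hence the ``if and only if'' in the statement is false for general $s$, and your restricted version is the correct claim.
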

\begin{proof}
Firstly, we establish the measure properties for pure states. Then, we extend to mixed states via convex roof extension.
Therefore, for any fully separable pure state $\ket{\psi}=\bigotimes_{j=1}^n \ket{\psi_j}$, $M^{(s)}_E(\ket{\psi})$ is a multipartite entanglement measure.

(1) Since $\ket{\psi}$ is a fully separable state, then the state $\ket{\psi}$ must be a product pure state in an arbitrary bipartition, i.e., $\ket{\psi}=\ket{\psi}_{X}\otimes\ket{\psi}_{X^c}$. Therefore, by setting $\rho=\ket{\psi}\bra{\psi}$, both $\rho_X^p$ and $\rho_{X^c}^p$ will be
the ground states of their local Hamiltonians. Hence, we have that
\begin{equation}
    \label{eq:non1}
    \Delta_{X|X^c}(\ket{\psi})=\tr(\rho^p_{X}H_{X})+\tr(\rho^p_{X^c}H_{X^c})=0.
\end{equation}
By Eq.(\ref{eq:non1}), it implies that $M^{(s)}_E(\ket{\psi})=0$. Furthermore, for mixed separable states $\rho=\sum_ip_i\bigotimes_{j=1}^n\ket{\psi_{ij}}\bra{\psi_{ij}}$, where each $\ket{\psi_{ij}}$ is a single-qubit pure state, the convex roof construction ensures \( M^{(s)}_E(\rho)=0 \).

(2) We first show that $M^{(s)}_E(\ket{\psi})$ is non-increasing on average under the pure state to pure state transformation. In Ref.\cite{puliyil2022thermodynamic}, the $k$-separable ergotropic gap $\Delta_{X_1|\cdots|X_k}$ of a pure state $\ket{\psi}_{A_1\cdots A_n}$ is non-increasing under LOCC $\Lambda$, i.e., $\Delta_{X_1|\cdots|X_k}(\ket{\psi})\geq \Delta_{X_1|\cdots|X_k}(\Lambda(\ket{\psi}))$ with $2\leq k\leq n$. Therefore, $\Delta_{X|X^c}(\ket{\psi})\geq \Delta_{X|X^c}(\Lambda(\ket{\psi}))$ holds. Furthermore, we have that
\begin{eqnarray}
   \label{eq:non2}
   M^{(s)}_E(\ket{\psi})&=&\frac{1}{2^{|s|}}\sum_{X\in\mathcal{P}(s)}\Delta_{X|X^c}(\ket{\psi})\nonumber\\
   &\geq& \frac{1}{2^{|s|}}\sum_{X\in\mathcal{P}(s)}\Delta_{X|X^c}(\Lambda(\ket{\psi}))\nonumber\\
   &=& M^{(s)}_E(\Lambda(\ket{\psi})).
\end{eqnarray}
Next, we consider the case of transforming a pure state into a mixed-state ensemble. Let $\Lambda = \sum_k \Lambda_k$ represent a multipartite LOCC protocol that transforms a pure state $ \ket{\psi_j}$ into a mixed-state ensemble $\{ p_{jk}, \rho_{jk} \}$, where
\begin{equation}
\label{eq:ensemble}
  \rho_{jk} = \frac{1}{p_{jk}} \Lambda_k(\ket{\psi_j})\ {\rm and}\ p_{jk} = \tr[\Lambda_k(\ket{\psi_j})].
\end{equation}

Assume that
\begin{equation}
\label{eq:rhojkl}
   \rho_{jk} = \sum_l p_{jkl} \ket{\psi_{jkl}}\bra{\psi_{jkl}},
\end{equation}
and it is the optimal decomposition for $ M^{(s)}_E(\rho_{jk})$, so it implies that
\begin{equation}
   M^{(s)}_E(\rho_{jk}) = \sum_l p_{jkl} M^{(s)}_E(\ket{\psi_{jkl}}).
\end{equation}
Under these assumptions, we have the following inequality:
\begin{eqnarray}\label{eq:pure-mixed}
    M^{(s)}_E(\ket{\psi_j})
&\geq& \sum_k \sum_l p_{jk} p_{jkl} M^{(s)}_E(\ket{\psi_{jkl}})\nonumber\\
&=& \sum_k p_{jk} M^{(s)}_E(\rho_{jk}).
\end{eqnarray}
Here, the first inequality holds follow from $M^{(s)}_E(\ket{\psi_j})$ is non-increasing, on average, under the pure state to pure state transformation.

Finally, we consider transforming a mixed state $\rho$ into a mixed-state ensemble $\{q_k, \rho_k\}$ under a multipartite LOCC protocol $\Lambda=\sum_k\Lambda_k$. Suppose $\rho=\sum_jp_j\ket{\psi_j}\bra{\psi_j}$ is the optimal decomposition of $M^{(s)}_E(\rho)$, i.e., $M^{(s)}_E(\rho)=\sum_jp_jM^{(s)}_E(\ket{\psi_j})$. Then, we have
\begin{eqnarray}
\label{eq:decomposition}
\Lambda(\rho)&=&\nonumber \sum_k\Lambda_k(\rho)
\\&=&\nonumber
\sum_k\Lambda_k(\sum_jp_j\ket{\psi_j}\bra{\psi_j})
\\&=&\nonumber
\sum_k\sum_jp_j\Lambda_k(\ket{\psi_j}\bra{\psi_j})
\\&=&
\sum_k\sum_jp_jp_{jk}\rho_{jk},
\end{eqnarray}
where the last equality in Eq.(\ref{eq:decomposition}) can be obtained from Eq.(\ref{eq:ensemble}). Due to Eq.(\ref{eq:rhojkl}) and we can see that $\rho_k=\frac{1}{q_k}\Lambda_k(\rho)$ with $q_k=\tr[\Lambda_k(\rho)]$. Hence,
by Eq.(\ref{eq:pure-mixed}), we obtain that
\begin{eqnarray}
\label{eq:mix1}
M^{(s)}_E(\rho) &=&\nonumber \sum_jp_jM^{(s)}_E(\ket{\psi_j})
\\&\geq&
\sum_jp_j\sum_kp_{jk}M^{(s)}_E(\rho_{jk}).
\end{eqnarray}
Due to $$\sum_{jl}\frac{p_jp_{jk}p_{jkl}}{q_k}M^{(s)}_E(\ket{\psi_{jkl}})\geq M^{(s)}_E(\rho_k) $$
which follows from
$\rho_k=\sum_{jl}\frac{p_jp_{jk}p_{jkl}}{q_k}\ket{\psi_{jkl}}\bra{\psi_{jkl}}$ and  $q_k=\tr\Big[\sum_{jl}p_jp_{jk}p_{jkl}\ket{\psi_{jkl}}\bra{\psi_{jkl}}\Big]$ and from Eq.(\ref{eq:deff1}),  we have that
\begin{eqnarray}
\label{eq:mix2}
&&\sum_jp_j\sum_kp_{jk}M^{(s)}_E(\rho_{jk})\nonumber\\
&=&
\sum_j\sum_k\sum_lp_jp_{jk}p_{jkl}M^{(s)}_E(\ket{\psi_{jkl}})
\nonumber \\
&{\geq}&
\sum_kq_kM^{(s)}_E(\rho_k).
\end{eqnarray}
Therefore, by Eq.(\ref{eq:mix1}) and Eq.(\ref{eq:mix2}), we can obtain that $M^{(s)}_E(\rho)\geq\sum_kq_kM^{(s)}_E(\rho_k)$.
\end{proof}

\section{Properties of ergotropic-gap concentratable entanglement}\label{sec:mainresults}
In this section, we show some properties of ergotropic-gap concentrarable entanglement measures such as contiunity, majorization monotonicity and monogamy.  To prove this majorization monotonicity, we first give the majorization criterion \cite{hiroshima2003majorization} as follows.

A state \(\rho\) is said to be \emph{majorized} by a state \(\sigma\), denoted \(\lambda(\rho) \prec \lambda(\sigma)\), if
\[
\sum_{i=1}^k p_i^\downarrow \leq \sum_{i=1}^k q_i^\downarrow \quad (1 \leq k \leq n-1)
\]
and
\[
\sum_{i=1}^n p_i^\downarrow = \sum_{i=1}^n q_i^\downarrow,
\]
where \(\lambda(\rho) \equiv \{p_i^\downarrow\} \in \mathbb{R}^n\) and \(\lambda(\sigma) \equiv \{q_i^\downarrow\} \in \mathbb{R}^n\) are the spectra of \(\rho\) and \(\sigma\), respectively, arranged in nonincreasing order \((p_1^\downarrow \geq p_2^\downarrow \geq \cdots \geq p_n^\downarrow)\), \((q_1^\downarrow \geq q_2^\downarrow \geq \cdots \geq q_n^\downarrow)\).

\begin{theorem}\label{the:contiunity}
Let \( \ket{\psi} \) and \( \ket{\phi} \) be $n$-qubit pure states.

(1) \textbf{Continuity}: if their trace distance satisfies \( D(\psi, \phi) := \frac{1}{2}\|\psi - \phi\|_1 \leq  \frac{\epsilon}{2\epsilon_{d-1}}\),
where \( D(\cdot, \cdot) \) denotes the trace distance, then
\begin{equation}\label{eq:contin_1}
\left|M^{(s)}_E(\ket{\psi}) - M^{(s)}_E(\ket{\phi})\right|
\leq \epsilon.
\end{equation}

(2) \textbf{Majorization monotonicity}: if $\lambda(\ket{\psi})\prec \lambda(\ket{\phi})$ for the spectrum of the individual marginals of $\ket{\phi}$ and $\ket{\psi}$, then $M^{(s)}_E(\ket{\psi})\geq M^{(s)}_E(\ket{\phi})$.
\end{theorem}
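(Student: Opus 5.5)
Both parts reduce to the pure-state formula \eqref{eq:Delta}. For a pure state it gives $\Delta_{X|X^c}(\ket{\psi})=\tr(\rho^p_XH_X)+\tr(\rho^p_{X^c}H_{X^c})$. By Schmidt decomposition, $\rho_X$ and $\rho_{X^c}$ have the same nonzero spectrum. Writing the passive energy as $E_X(\rho_X)=\sum_j\lambda_j^{X\downarrow}\epsilon^X_j$, with eigenvalues in nonincreasing order paired against energies in nondecreasing order, we get $M^{(s)}_E(\ket{\psi})=2^{-|s|}\sum_{X\in\mathcal{P}(s)}[E_X(\rho_X)+E_{X^c}(\rho_{X^c})]$. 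Each term is thus a function of the marginal spectra alone, and both claims follow from properties of $E_X$.

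For continuity I will establish a Lipschitz bound for $E_X$ in trace distance. By Abel summation, $E_X(\rho)=\sum_{k\ge1}(\epsilon_k-\epsilon_{k-1})\bigl(1-\sum_{i<k}\lambda_i^\downarrow\bigr)$. Ky Fan's principle shows that the partial sums $\sum_{i<k}\lambda_i^\downarrow$ are $1$-Lipschitz in trace distance, and the gaps sum to $\epsilon_{d-1}$. Hence $|E_X(\rho_X)-E_X(\sigma_X)|\le\epsilon_{d-1}D(\rho_X,\sigma_X)$. Contractivity of the trace distance under partial trace gives $D(\rho_X,\sigma_X)\le D(\psi,\phi)$. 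Each bipartite gap then changes by at most $2\epsilon_{d-1}D(\psi,\phi)\le\epsilon$. The empty set contributes zero, so averaging over the $2^{|s|}$ subsets gives a bound of at most $\epsilon$.

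The delicate point is the constant. Here $\epsilon_{d-1}$ must be read as the largest local level over all the relevant local Hamiltonians, and each of $X$ and $X^c$ contributes a factor $\epsilon_{d-1}$. I expect this bookkeeping to be where the stated threshold $\epsilon/(2\epsilon_{d-1})$ is used.

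For majorization monotonicity, the same Abel representation shows that $E_X$ is Schur-concave. If $\lambda(\rho_X)\prec\lambda(\sigma_X)$, then each partial sum $\sum_{i<k}\lambda_i^\downarrow(\rho_X)$ is at most the corresponding sum for $\sigma_X$. Since the energy gaps are nonnegative, $E_X(\rho_X)\ge E_X(\sigma_X)$. The hypothesis is applied to every marginal $X$ and $X^c$, using the shared Schmidt spectrum so that it holds on both sides. Summing over $X\in\mathcal{P}(s)$ then gives $M^{(s)}_E(\ket{\psi})\ge M^{(s)}_E(\ket{\phi})$. The only real obstacle in the whole argument is the Ky Fan Lipschitz estimate in the continuity part; the majorization part is immediate from the Abel form.
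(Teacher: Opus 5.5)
Your proposal is correct. The majorization part is essentially the paper's argument: the paper also Abel-sums $\sum_j(\lambda_j-\eta_j)\epsilon_j$ into $\sum_k(\epsilon_{k+1}-\epsilon_k)\sum_{j\le k}(\eta_j-\lambda_j)$ and then uses nonnegative gaps and the partial-sum inequalities. The only difference is that the paper merges $X$ and $X^c$ into one ladder $\epsilon_j=\epsilon^X_j+\epsilon^{X^c}_j$ via the shared Schmidt spectrum, while you treat the two marginals separately.

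For continuity the skeleton is also shared: a triangle inequality over $X\in\mathcal{P}(s)$, then a spectral bound for each bipartition. The key estimate, however, is different. The paper bounds the gap difference by $\epsilon_{d-1}\sum_j|\lambda_j-\eta_j|$ and then asserts $D(\psi,\phi)=\frac{1}{2}\sum_j|\lambda_j-\eta_j|$ for the Schmidt coefficients. As an equality this is false: for $\ket{00}$ and $\ket{11}$ the spectra coincide but $D=1$. Only the inequality $\frac{1}{2}\sum_j|\lambda_j-\eta_j|\le D(\psi,\phi)$ holds, and it needs a Mirsky/Lidskii-type spectral estimate together with contractivity under partial trace.

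Your route (Abel summation, Ky Fan's maximum principle, then contractivity) supplies exactly that missing justification. It also gives a Lipschitz constant a factor of two smaller than the paper's: $D$ rather than $2D$ times the relevant top energy. That factor matters for your reading of the threshold. If $\epsilon_{d-1}$ is the largest single-subsystem level, the paper's cruder bound would give only $2\epsilon$ per bipartition, whereas yours gives $\epsilon$. The paper's version is a shorter direct computation with Schmidt coefficients; yours is a few lines longer but does not rest on the incorrect identity.
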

\begin{proof}
Suppose that the Schmidt decomposition of $\ket{\psi}$ and $\ket{\phi}$ in the energy basis are denoted by
\begin{eqnarray}
    \label{eq:SCH-DEC1}
    &&\ket{\psi}=\sum_{i=1}^{d-1}\sqrt{\lambda_i}\ket{\alpha_i^X}\ket{\beta_i^{X^c}},\nonumber\\
    &&\ket{\phi}=\sum_{i=1}^{d-1}\sqrt{\eta_i}\ket{a_i^X}\ket{b_i^{X^c}},\end{eqnarray}
where $\lambda_i$ and $\eta_i$ have been chosen in nonincreasing order.
Then, the Schmidt decomposition gives the same specutrum for the marginals, which can be written in the passive form in the energy basis as following
\begin{eqnarray}
    \label{eq:SCH-DEC2}
    &&\rho^p_X(\ket{\psi})=\rho^p_{X^c}(\ket{\psi})=\sum_{j=0}^{d-1}\lambda_j\ket{j}\bra{j},\nonumber\\
    &&\rho^p_X(\ket{\phi})=\rho^p_{X^c}(\ket{\phi})=\sum_{j=0}^{d-1}\eta_j\ket{j}\bra{j},\end{eqnarray}
where the reduced systems $X$ and $X^c$ are governed by Hamiltonians $H_X=\sum_{j=0}^{d-1}\epsilon_j^X\ket{j}\bra{j}$ and $H_{X^c}=\sum_{j=0}^{d-1}\epsilon_j^{X^c}\ket{j}\bra{j}$.
\\

(1)\  \textbf{Continuity bound:}
\begin{eqnarray}
&&|M^{(s)}_E(\ket{\psi}) - M^{(s)}_E(\ket{\phi})| \nonumber\\
&=&\nonumber \frac{1}{2^{|s|}}|\sum_{X\in\mathcal{P}(s)}\Delta_{X|X^c}(\ket{\psi})-\sum_{X\in\mathcal{P}(s)}\Delta_{X|X^c}(\ket{\phi})|
\\\nonumber &=&
\frac{1}{2^{|s|}}|\sum_{X\in\mathcal{P}(s)}(\Delta_{X|X^c}(\ket{\psi})-\Delta_{X|X^c}(\ket{\phi}))|
\\\nonumber &\overset{(a)}{\leq}&
\frac{1}{2^{|s|}}\sum_{X\in\mathcal{P}(s)}|\Delta_{X|X^c}(\ket{\psi})-\Delta_{X|X^c}(\ket{\phi})| \\ \nonumber
&\overset{(b)}{=}&\frac{1}{2^{|s|}}\sum_{X\in\mathcal{P}(s)}|\sum_{j=0}^{d-1}(\lambda_j-\eta_j)\epsilon_j| \\\nonumber
 &\overset{(c)}{\leq}&
\sum_{j=0}^{d-1} |\lambda_j-\eta_j|\epsilon_{d-1}\\ &\overset{(d)}{\leq}&
\frac{\epsilon}{2\epsilon_{d-1}}\epsilon_{d-1} \nonumber\\
&\overset{}{\leq}&
\epsilon,
\end{eqnarray}
where the validity of (a) follows from the triangle inequality; (b) holds follows from \(\Delta_{X|X^c}(\ket{\psi})-\Delta_{X|X^c}(\ket{\phi})=\sum_{j=0}^{d-1}(\lambda_j-\eta_j)\epsilon_j\) and \(\epsilon_j=\epsilon_j^X+\epsilon_j^{X^c}\) holds~\cite{alimuddin2019bound}; (c) holds follows from the triangle inequality and $\epsilon_j\leq\epsilon_{j+1}$; (d) holds by the inequality $D(\psi, \phi)= \frac{1}{2}\|\psi - \phi\|_1=\frac{1}{2}\sum_{j=0}^{d-1}|\lambda_j-\eta_j|\leq  \frac{\epsilon}{2\epsilon_{d-1}}$.

(2)\ \textbf{Majorization monotonicity}:
\begin{eqnarray}
&&M^{(s)}_E(\ket{\psi}) - M^{(s)}_E(\ket{\phi}) \nonumber\\
&=&\nonumber
\frac{1}{2^{|s|}}\sum_{X\in\mathcal{P}(s)}(\Delta_{X|X^c}(\ket{\psi})-\Delta_{X|X^c}(\ket{\phi}))
\nonumber\\
&\overset{(e)}{=}&\frac{1}{2^{|s|}}\sum_{X\in\mathcal{P}(s)}\sum_{j=0}^{d-1}(\lambda_j-\eta_j)\epsilon_j \\\nonumber
&\overset{(f)}{=}&\sum_{k=0}^{d-2}(\epsilon_{k+1}-\epsilon_{k})\sum_{j=0}^{k}(\eta_j-\lambda_j) \\\nonumber
&\overset{(g)}{\geq}&
0,
\end{eqnarray}
where (e) holds follows from \(\Delta_{X|X^c}(\ket{\psi})-\Delta_{X|X^c}(\ket{\phi})=\sum_{j=0}^{d-1}(\lambda_j-\eta_j)\epsilon_j\) and \(\epsilon_j=\epsilon_j^X+\epsilon_j^{X^c}\) holds~\cite{alimuddin2019bound}; (f) is obtained by rearranging the equality (e) and follows from $\epsilon_k\leq\epsilon_{k+1}$ for any \(k\in\{0,1,\cdots,d-2\}\); (g) holds due to the majorization condition $\lambda(\ket{\psi})\prec \lambda(\ket{\phi})$, i.e., $\sum_{j=0}^{k} \lambda_j\leq \sum_{j=0}^{k}\eta_j $ for any $k\geq 0$.
\end{proof}
\begin{proposition}
  \label{th:monogamy}
Let $s = [1]$ and given the local Hamiltonian $H_{A_i}=\ket{1}\bra{1}$ for every system, then
the following monogamy relation holds:
\begin{eqnarray}
  &&(M_{E}^{(s)}(\ket{\psi}_{A_1\cdots A_i\cdots A_n}))^\alpha\nonumber\\
  &\geq& (M_{E}^{(s)}(\rho_{A_iA_1}))^\alpha+\cdots+(M_{E}^{(s)}(\rho_{A_iA_{i-1}}))^\alpha \nonumber\\
   && +(M_{E}^{(s)}(\rho_{A_iA_{i+1}}))^\alpha+\cdots+(M_{E}^{(s)}(\rho_{A_iA_{n}}))^\alpha,
\end{eqnarray}
where $\rho_{A_iA_j}=\tr_{A_1\cdots A_{i-1}A_{i+1}\cdots A_{j-1}A_{j+1}\cdots A_n}(\ket{\psi}\bra{\psi})$ and $\alpha\geq 1$.
\end{proposition}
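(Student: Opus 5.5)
The plan is to reduce $M_E^{(s)}$, in this special setting, to an explicit function of the concurrence. Then the inequality follows from the Coffman--Kundu--Wootters (CKW) monogamy of squared concurrence and an elementary superadditivity argument. I read $s$ as the single focal qubit $A_i$. Then $\mathcal{P}(s)=\{\emptyset,\{A_i\}\}$ and
\[
M_E^{(s)}(\ket{\psi})=\tfrac12\,\Delta_{A_i|A_i^c}(\ket{\psi}).
\]
The same convention is used for each two-qubit marginal $\rho_{A_iA_j}$, with bipartition $A_i|A_j$.

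\textbf{Step 1: pure states.} Let $\ket{\psi}$ be a pure state and let $\lambda_{\min}$ be the smaller eigenvalue of $\rho_{A_i}$. Since $H_{A_i}=\ket{1}\bra{1}$, we have $\tr(\rho^p_{A_i}H_{A_i})=\lambda_{\min}$.

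The complement $A_i^c$ carries $H=\sum_{k\neq i}\ket{1}\bra{1}_{A_k}$. Its ground state is unique with energy $0$, and its first excited level has energy $1$. By the Schmidt decomposition, $\rho_{A_i^c}$ has rank at most $2$, with nonzero eigenvalues $1-\lambda_{\min}\geq\lambda_{\min}$. So its passive state puts $1-\lambda_{\min}$ on the ground state and $\lambda_{\min}$ on an energy-$1$ level, giving energy $\lambda_{\min}$.

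By Eq.~(\ref{eq:Delta}), $\Delta_{A_i|A_i^c}=2\lambda_{\min}$, hence $M_E^{(s)}(\ket{\psi})=\lambda_{\min}$. Using $C^2=4\det\rho_{A_i}$ for the concurrence $C$ of the cut $A_i|A_i^c$, this becomes
\[
M_E^{(s)}(\ket{\psi})=f(C),\qquad f(x)=\tfrac12\bigl(1-\sqrt{1-x^2}\bigr).
\]
The same formula holds for two-qubit pure states.

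\textbf{Step 2: two-qubit mixed states.} I will show $M_E^{(s)}(\rho_{A_iA_j})=f(C(\rho_{A_iA_j}))$, with $C$ the Wootters concurrence. This is the step I expect to be the main technical point.

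The argument copies Wootters' treatment of the entanglement of formation. First, $f$ is monotonically increasing and convex on $[0,1]$: a direct computation gives $f''(x)=\tfrac12(1-x^2)^{-3/2}>0$. Convexity and monotonicity give, for any decomposition,
\[
\sum_k p_k f(C_k)\ \geq\ f\Bigl(\sum_k p_k C_k\Bigr)\ \geq\ f(C(\rho)).
\]
Wootters' construction supplies a decomposition in which every pure component has concurrence exactly $C(\rho)$, so the lower bound is attained.

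\textbf{Step 3: monogamy.} Write $g(t)=\tfrac12(1-\sqrt{1-t})$, so that $f(x)=g(x^2)$. The function $g$ is increasing and convex on $[0,1]$ with $g(0)=0$, hence superadditive: $g(a+b)\geq g(a)+g(b)$ whenever $a+b\leq 1$.

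The CKW inequality gives $C^2_{A_i|A_i^c}\geq\sum_{j\neq i}C^2_{A_iA_j}$, and the right-hand side is at most $1$. Combining this with monotonicity and then superadditivity of $g$,
\[
M_E^{(s)}(\ket{\psi})=g\bigl(C^2_{A_i|A_i^c}\bigr)\ \geq\ g\Bigl(\sum_{j\neq i}C^2_{A_iA_j}\Bigr)\ \geq\ \sum_{j\neq i}g\bigl(C^2_{A_iA_j}\bigr)=\sum_{j\neq i}M_E^{(s)}(\rho_{A_iA_j}).
\]
Finally, for $\alpha\geq1$ and nonnegative $a_j$ we have $(\sum_j a_j)^\alpha\geq\sum_j a_j^\alpha$. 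Raising both sides of the last display to the power $\alpha$ yields the stated inequality.
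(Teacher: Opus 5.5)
Your argument is correct, but it follows a different route from the paper. The paper's proof rests on a citation. Taking the cut $A_1|A_2\cdots A_n$ as representative, it writes $M_E^{(s)}(\ket{\psi})=\tfrac12\Delta_{A_1|A_2\cdots A_n}(\ket{\psi})$ and imports the inequality $\Delta^\alpha_{A_1|A_2\cdots A_n}\geq\sum_{j\neq 1}\Delta^\alpha_{A_1|A_j}$ from Sun et al. It then divides by $2^\alpha$ and identifies each $\tfrac12\Delta_{A_1|A_j}$ with $M_E^{(s)}(\rho_{A_1A_j})$ without comment.

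You derive everything explicitly instead:
\begin{itemize}
\item on pure states, $M_E^{(s)}=\lambda_{\min}=f(C)$;
\item Wootters' equal-concurrence decomposition, with convexity and monotonicity of $f$, gives the convex roof $M_E^{(s)}(\rho_{A_iA_j})=f(C(\rho_{A_iA_j}))$;
\item the CKW-type inequality and superadditivity of $g$ and of $x\mapsto x^\alpha$ finish the argument.
\end{itemize}

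Your route justifies exactly the step the paper leaves implicit. The right-hand side involves the convex-roof quantity of Eq.~(\ref{eq:deff1}), not the mixed-state ergotropic gap of Sec.~II, and the two differ in general. For example, the latter equals $\tfrac12$ on the separable state $\tfrac12(\ket{00}\bra{00}+\ket{11}\bra{11})$, where the convex roof vanishes. Your Step~2 is what makes the identification legitimate. It also shows that the real content is the case $\alpha=1$, with $\alpha>1$ following from $(\sum_j a_j)^\alpha\geq\sum_j a_j^\alpha$. The paper's route gains brevity by outsourcing all of this to the reference.

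Two small points to tighten:
\begin{itemize}
\item For $n>3$ you need the Osborne--Verstraete extension of CKW rather than the original three-qubit inequality.
\item The second inequality in Step~2 relies on $C(\rho)$ being the minimal average concurrence over all decompositions. State this explicitly as part of Wootters' result, alongside the existence of the equal-concurrence decomposition.
\end{itemize}
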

\begin{proof}
Due to $s = [1]$, let us consider the bi-separable $A_1|A_2\cdots A_n$ as an example.
Therefore,
$$M_{E}^{(s)}(\ket{\psi}_{A_1\cdots A_i\cdots A_n})=\frac{1}{2}\Delta_{A_1|A_2\cdots A_n}(\ket{\psi}).$$
Since $\Delta^\alpha_{A_1|A_2\cdots A_n}(\ket{\psi})\geq \Delta^\alpha_{A_1|A_2}(\ket{\psi})+\cdots +\Delta^\alpha_{A_1|A_n}(\ket{\psi})$ \cite{sun2024genuine}, we have that
\begin{eqnarray}
   && (M_{E}^{(s)}(\ket{\psi}_{A_1\cdots A_i\cdots A_n}))^\alpha\nonumber\\
    &=&
    \frac{1}{2^\alpha}\Delta^\alpha_{A_1|A_2\cdots A_n}(\ket{\psi})\nonumber \\
    &\geq&
    \frac{1}{2^\alpha}(\Delta^\alpha_{A_1|A_2}(\ket{\psi})+\cdots +\Delta^\alpha_{A_1|A_n}(\ket{\psi}))\nonumber \\
    &=&
    (M_{E}^{(s)}(\rho_{A_1A_2}))^\alpha+\cdots+(M_{E}^{(s)}(\rho_{A_1A_n}))^\alpha.
\end{eqnarray}
\end{proof}
\begin{proposition}\label{the:CE-TRI}
Suppose that $s=[3]$ and a three-qubit state $\ket{\psi}_{ABC}$ governed by the local Hamiltonian $H_i=\ket{1}\bra{1}$ for each system, where the reduced density matrices $\rho_{AB}=\tr_C(\ket{\psi}\bra{\psi})$, $\rho_{AC}=\tr_B(\ket{\psi}\bra{\psi})$ and $\rho_{BC}=\tr_A(\ket{\psi}\bra{\psi})$ have eigenvalues arranged in nonincreasing order denoted by
$\{x_i\}^3_{i=0}$, $\{y_i\}^3_{i=0}$ and $\{z_i\}^3_{i=0}$,
respectively. The state is multipartite entangled if
\begin{equation}
    M^{(s)}_E(\ket{\psi}_{ABC})> \min \{\frac{1}{2}(x_1+x_2),\frac{1}{2}(y_1+y_2), \frac{1}{2}(z_1+z_2), \frac{1}{4}\}.
\end{equation}
Furthermore, a three-qubit state $\ket{\psi}_{ABC}$ is genuinely multipartite entangled if
\begin{equation}
    M^{(s)}_E(\ket{\psi}_{ABC})>\frac{1}{2}.
\end{equation}
\end{proposition}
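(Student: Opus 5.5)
The plan is to compute $M^{(s)}_E(\ket{\psi}_{ABC})$ explicitly for $s=[3]$ and $H_i=\ket{1}\bra{1}$ in terms of the smallest eigenvalues of the single-qubit marginals. Both claims should then follow from elementary bounds.

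\textbf{Step 1 (closed form).} Write $\mu_A,\mu_B,\mu_C\in[0,\tfrac12]$ for the smaller eigenvalues of $\rho_A,\rho_B,\rho_C$. By Eq.~(\ref{eq:Delta}), $\Delta_{A|BC}=\tr(\rho_A^pH_A)+\tr(\rho_{BC}^pH_{BC})$.
\begin{itemize}
\item The passive state of $\rho_A$ puts weight $\mu_A$ on $\ket{1}$, so $\tr(\rho_A^pH_A)=\mu_A$.
\item By the Schmidt decomposition, $\rho_{BC}$ has spectrum $\{1-\mu_A,\mu_A,0,0\}$. The Hamiltonian $H_{BC}=H_B\otimes\mathbb{I}+\mathbb{I}\otimes H_C$ has levels $0,1,1,2$, so the passive state places $1-\mu_A$ on energy $0$ and $\mu_A$ on energy $1$. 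Hence $\tr(\rho_{BC}^pH_{BC})=\mu_A$.
\end{itemize}
Therefore $\Delta_{A|BC}=2\mu_A$, and likewise for $B$ and $C$.

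Now run over the eight subsets $X\in\mathcal{P}(\{A,B,C\})$:
\begin{itemize}
\item $X=\emptyset$ contributes $0$ by convention.
\item $X=ABC$ has empty complement. For a pure global state the global passive state is the ground state, so this term is $0$.
\item The two-element sets satisfy $\Delta_{AB|C}=\Delta_{C|AB}$, and similarly for the others.
\end{itemize}
This gives
\[
M^{(s)}_E(\ket{\psi}_{ABC})=\tfrac18\cdot 2\cdot 2(\mu_A+\mu_B+\mu_C)=\tfrac12(\mu_A+\mu_B+\mu_C).
\]
The same Schmidt argument shows that $\rho_{AB}$ has eigenvalues $\{1-\mu_C,\mu_C,0,0\}$, so $x_1=\mu_C$ and $x_2=0$. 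Hence $\tfrac12(x_1+x_2)=\tfrac12\mu_C$, and similarly $\tfrac12(y_1+y_2)=\tfrac12\mu_B$ and $\tfrac12(z_1+z_2)=\tfrac12\mu_A$.

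\textbf{Step 2 (first claim).} Every quantity in the minimum is nonnegative. So the hypothesis forces $M^{(s)}_E(\ket{\psi})>0$. By Theorem~\ref{the:vNCE}(1), in contrapositive form (fully separable states have $M^{(s)}_E=0$), the state is not fully separable, i.e. it is multipartite entangled.

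One could also argue directly. If $\ket{\psi}$ were fully separable, all $\mu$'s would vanish and $M^{(s)}_E=0$, which does not exceed any of the listed quantities. Indeed, if $\ket{\psi}$ factorizes as $C|AB$, then $\mu_C=0$ and $\mu_A=\mu_B$. In that case $M^{(s)}_E=\mu_A$, which can be compared with the terms in the minimum. The essential content, however, is just the strict positivity of $M^{(s)}_E$.

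\textbf{Step 3 (GME claim).} Suppose $\ket{\psi}$ is biseparable, say $\ket{\psi}=\ket{a}_A\ket{\chi}_{BC}$. Then $\mu_A=0$ and $\mu_B=\mu_C\le\tfrac12$, so $M^{(s)}_E\le\tfrac12(0+\tfrac12+\tfrac12)=\tfrac12$. The cuts $B|AC$ and $C|AB$ are handled symmetrically. Hence $M^{(s)}_E>\tfrac12$ excludes every bipartite product form, which for pure states means genuine multipartite entanglement.

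The main obstacle is Step 1: correctly evaluating the passive energy of the two-qubit marginal using the degenerate level structure $0,1,1,2$, and handling the conventions for $X=\emptyset$ and $X=[3]$. Once the closed form is established, the rest is bookkeeping.
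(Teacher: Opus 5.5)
Your proof is correct, but it takes a more direct route than the paper's.

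The paper never computes $M^{(s)}_E$ in closed form. It writes $M^{(s)}_E=\frac14(\Delta_{A|BC}+\Delta_{B|AC}+\Delta_{C|AB})$ and uses the cited relation $\Delta_{A|BC}\le\Delta_{A|B}(\rho_{AB})+\Delta_{A|C}(\rho_{AC})$, which is an equality for qubits, to rewrite this through the two-qubit marginal gaps. For a state $\ket{\phi}_{AB}\ket{\phi}_C$ it then reduces to $\frac12\Delta_{A|B}(\rho_{AB})$ and imports two external facts: the separable two-qubit bound $\Delta_{A|B}\le\min\{x_1+x_2,\frac12\}$, and the maximal two-qubit gap $1$ attained by the Bell state.

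You instead evaluate each gap directly from the Schmidt decomposition and the $0,1,1,2$ spectrum of the two-qubit Hamiltonian. This gives $\Delta_{X|X^c}=2\mu_X$ and $M^{(s)}_E=\frac12(\mu_A+\mu_B+\mu_C)$. The result agrees with the paper's chain: $\Delta_{A|B}(\rho_{AB})=\mu_A+\mu_B-\mu_C$, and the three marginal gaps sum to $\mu_A+\mu_B+\mu_C$.

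Your route is self-contained and makes the thresholds transparent. Since $\rho_{AB}$ has rank at most two, $\frac12(x_1+x_2)=\frac12\mu_C$. Because $\frac12(\mu_A+\mu_B+\mu_C)>\frac12\min\{\mu_A,\mu_B,\mu_C\}$ whenever it is nonzero, your formula shows the first criterion is actually equivalent to $M^{(s)}_E>0$, i.e. to the pure state being entangled. The GME part likewise becomes the one-line estimate $M^{(s)}_E=\mu_B\le\frac12$ for an $A|BC$ product.

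The paper's route is less explicit, but it phrases everything through the reduced states $\rho_{AB},\rho_{AC},\rho_{BC}$, which is why the statement is written in terms of their spectra. The side remark in your Step 2 about the $C|AB$ factorization is not needed and can be dropped.
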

\begin{proof}
    It is easy to see that $M^{(s)}_E(\ket{\psi})=0$ for a product state $\ket{\psi}$. Therefore, for a tripartite pure state $\ket{\psi}_{ABC}$, the spectra of bipartite reductions satisfy $\lambda(\rho_A)=\lambda(\rho_{BC})$ (similarly for other cuts). If we consider the equal marginal Hamiltonian $H=\sum_ii\ket{i}\bra{i}$ for every party, then we have that~\cite{alimuddin2020independence}
    \begin{equation}
        \label{eq:detal}
        \Delta_{A|BC}(\ket{\psi})\leq \Delta_{A|B}(\rho_{AB})+\Delta_{A|C}(\rho_{AC}),
    \end{equation}
where the equality would hold for the $\mathbb{C}^2\otimes\mathbb{C}^2\otimes\mathbb{C}^2$ systems only.

   By Eq.(\ref{eq:detal}), the following chain of equalities hold:
\begin{eqnarray}\label{eq:genuninelymultipartite}
   && M^{(s)}_E(\ket{\psi}_{ABC}) \nonumber\\
    &=& \nonumber \frac{1}{2^3}\sum_{X\in\mathcal{P}(s)}\Delta_{X|X^c}(\ket{\psi})\\ \nonumber
    &=&
    \frac{1}{4}(\Delta_{A|BC}(\ket{\psi})+\Delta_{B|AC}(\ket{\psi})+\Delta_{C|AB}(\ket{\psi}))\\
    &{\leq}&
    \frac{1}{2}(\Delta_{A|B}(\rho_{AB})+\Delta_{A|C}(\rho_{AC})+\Delta_{B|C}(\rho_{BC})),
\end{eqnarray}
where the equality holds when each qubit is governed by the Hamiltonian $H_i=\ket{1}\bra{1}$.

Without loss of generality, we assume
\(\ket{\psi}_{ABC} = \ket{\phi}_{AB} \ket{\phi}_C \). Thus, we have that
$\Delta_{C|AB}(\ket{\psi}_{ABC})=0.$
From the conditions under which Eq.(\ref{eq:detal}) holds, we can obtain that
$$\Delta_{C|AB}(\ket{\psi}_{ABC})=\Delta_{C|A}(\rho_{AC})+\Delta_{C|B}(\rho_{BC})=0,$$
namely, $\Delta_{C|A}(\rho_{AC})=0$ and $\Delta_{C|B}(\rho_{BC})=0$. Furthermore, Eq.({\ref{eq:genuninelymultipartite}}) is changed to
\begin{equation}
    M^{(s)}_E(\ket{\psi}_{ABC})\leq \frac{1}{2}\Delta_{A|B}(\rho_{AB})=\frac{1}{2}M^{(s)}_E(\rho_{AB}).
\end{equation}
When each qubit is governed by the Hamiltonian $H_i=\ket{1}\bra{1}$, then we have that
\begin{equation}
    \label{eq:equal}
    M^{(s)}_E(\ket{\psi}_{ABC})=\frac{1}{2}M^{(s)}_E(\rho_{AB}).
\end{equation}

If $\rho_{AB}$ is a separable two-qubit state and the reduced subsystems are governed by the same Hamiltonian $H_{A/B}=\ket{1}\bra{1}$, then the ergotropy gap
is bounded by $\Delta_{A|B}(\rho_{AB})\leq \min \{x_1+x_2, \frac{1}{2}\}$ \cite{alimuddin2019bound}, where $\frac{1}{2}$ is the maximum ergotropic gap over the whole state space of separable states.
Furthermore, it implies that
\begin{equation}
    \label{eq:esphi1}
   M^{(s)}_E(\ket{\psi}_{ABC})=\frac{1}{2}\Delta_{A|B}(\rho_{AB})\leq  \min \{\frac{1}{2}(x_1+x_2), \frac{1}{4}\}.
\end{equation}
Similarly, we can obtain that
\begin{eqnarray}
    &&M^{(s)}_E(\ket{\psi}_{ABC})\leq  \min \{\frac{1}{2}(y_1+y_2), \frac{1}{4}\},\\
    &&M^{(s)}_E(\ket{\psi}_{ABC})\leq  \min \{\frac{1}{2}(z_1+z_2), \frac{1}{4}\}.
\end{eqnarray}
Thus, we conclude that $\ket{\psi}_{ABC}$
is multipartite
entangled if
\begin{eqnarray}
    M^{(s)}_E(\ket{\psi}_{ABC})&>& \min \{\frac{1}{2}(x_1+x_2),\frac{1}{2}(y_1+y_2),\nonumber\\
    &\qquad& \frac{1}{2}(z_1+z_2), \frac{1}{4}\}.\nonumber
\end{eqnarray}

If $\rho_{AB}$ is a two-qubit entangled state, then $M^{(s)}_E(\rho_{AB})\leq M^{(s)}_E(\ket{\beta}_{AB})=1$ \cite{puliyil2022thermodynamic}, where $\ket{\beta}_{AB}=\frac{1}{\sqrt{2}}(\ket{00}+\ket{11})$. Therefore, the maximal value of $M^{(s)}_E(\rho_{AB})$ is 1, and it implies that
\begin{equation}
    \label{eq:esphi2}
   M^{(s)}_E(\ket{\psi}_{ABC})=\frac{1}{2}M^{(s)}_E(\rho_{AB})\leq  \frac{1}{2}.
\end{equation}
Thus, we conclude that $\ket{\psi}_{ABC}$
is genuinely multipartite
entangled if $M^{(s)}_E(\ket{\psi}_{ABC})>\frac{1}{2}$.
\end{proof}

\textbf{Remark:} we have seen that \(M^{(s)}_E(\ket{\psi}_{ABC})>\min \{\frac{1}{2}(x_1+x_2),\frac{1}{2}(y_1+y_2),\frac{1}{2}(z_1+z_2),\frac{1}{4}\}\) detects entanglement, where \(\frac{1}{2}(x_1+x_2)\), \(\frac{1}{2}(y_1+y_2)\) and \(\frac{1}{2}(z_1+z_2)\) are the spectral-dependent criterion, and \(\frac{1}{4}\) (the maximal value over all separable two-qubit states) is the dimension-dependent criterion. For instance, considering a three qubit state $\ket{\psi}_{ABC}=\lambda_0\ket{000}+\lambda_3\ket{110}+\lambda_4\ket{111}$ and each having the  Hamiltonian $H_i=\ket{1}\bra{1}$, we have that $$M_{E}^{(s)}(\ket{\psi}_{ABC})=\frac{1}{4}(3-2\sqrt{1-4\lambda_0^2(1-\lambda_0^2)}-\sqrt{1-4\lambda_0^2\lambda_4^2})$$ with $\lambda_0^2+\lambda_3^2+\lambda_4^2=1$. As shown in Fig.\ref{fig:TRIGEN}, when $\lambda_0\in(0.57,0.82)$, there exists the  genuinely multipartite
entangled state $\ket{\psi}_{ABC}$.

\begin{figure}[htpb]
\centering
    \includegraphics[width=1.05\linewidth]{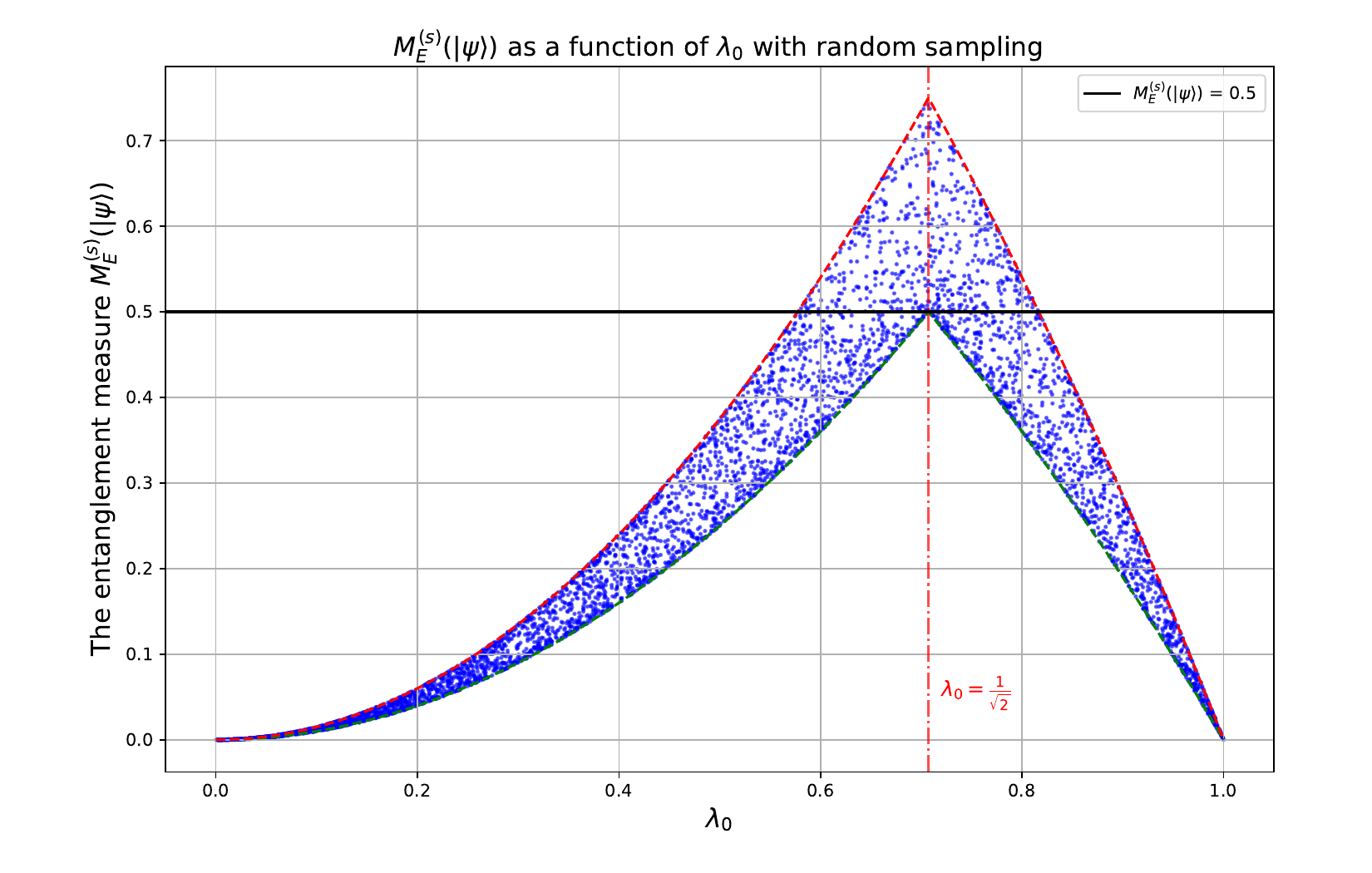}
     \caption{$M^{(s)}_{E}(\ket{\psi}_{ABC})$ as a function of $\lambda_0$ with random sampling and $s=[3]$.}
    \label{fig:TRIGEN}
\end{figure}
Furthermore, we will study bewteen ergotropic-gap concentratable entanglement  with the other measures. The
following entanglement measures \cite{puliyil2022thermodynamic} will be considered in this paper.
\begin{itemize}
    \item Minimum bi-separable erogotropic gap:$$\Delta^{[2]}_{\min}(\ket{\psi})=\min_{X\in \mathcal{P}(s)} \{\Delta_{X|X^c}(\ket{\psi})\}.$$
    \item Bi-separable average erogotropic gap:
    $$\Delta^{[2]}_{\rm avg}(\ket{\psi})=\frac{1}{2^n-2}\sum_{X\in \mathcal{P}(s)}\Delta_{X|X^c}(\ket{\psi}).$$
    \item 2-erogotropic volume:
    $$\Delta^{[2]}_{\rm V}(\ket{\psi})=\Big(\prod_{X\in \mathcal{P}(s)}\Delta_{X|X^c}(\ket{\psi})\Big)^{\frac{1}{2^n-2}}.$$
    \item Fully separable ergotropic gap:
    $$\Delta_{A_1|\cdots|A_n}^{(n)}(\ket{\psi})= \sum_{i=1}^n \tr(\rho_{A_i}^p H_{A_i}).$$
\end{itemize}

\begin{proposition}
 \label{the:relatinship}
 For any $n$-qubit pure state $\ket{\psi}$ and $s=[n]$:

(1) The ergotropic-gap concentratable entanglement $M^{(s)}_E(\ket{\psi})$ satisfies
\begin{equation}
 \Delta^{[2]}_{\min}(\ket{\psi})\leq M^{(s)}_E(\ket{\psi})\leq \Delta^{[2]}_{\rm avg}(\ket{\psi}),
\end{equation}
where $\lim_{n\to \infty}  M^{(s)}_E(\ket{\psi})= \Delta^{[2]}_{\rm avg}(\ket{\psi})$.

(2) Under the local Hamiltonian $H_i=\ket{1}\bra{1}$ for every system, the ergotropic-gap concentratable entanglement $M^{(s)}_E(\ket{\psi})$ satisfies
\begin{equation}
\frac{1}{2^{2^n-1+n}}(\Delta^{[2]}_{\rm V}(\ket{\psi}))^{2^n-2}\leq M^{(s)}_E(\ket{\psi}) \leq \Delta_{A_1|\cdots|A_n}^{(n)}(\ket{\psi}).
\end{equation}
\end{proposition}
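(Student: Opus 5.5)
The plan is to write everything in terms of the $2^n$ numbers $\Delta_X:=\Delta_{X|X^c}(\ket{\psi})$, $X\in\mathcal{P}([n])$. With $s=[n]$ we have $M^{(s)}_E(\ket{\psi})=2^{-n}\sum_{X}\Delta_X$. Here $\Delta_\emptyset=0$ by convention, and $\Delta_{[n]}=0$ as well, since $X^c=\emptyset$ makes the state a product across the trivial cut. The remaining $2^n-2$ terms are exactly those entering $\Delta^{[2]}_{\min}$, $\Delta^{[2]}_{\rm avg}$ and $\Delta^{[2]}_{\rm V}$, which I read as ranging over the nontrivial bipartitions. With this reading,
\begin{equation}
M^{(s)}_E(\ket{\psi})=\frac{2^n-2}{2^n}\,\Delta^{[2]}_{\rm avg}(\ket{\psi}).
\end{equation}
The upper bound in (1) then follows at once because the prefactor is at most $1$ and each $\Delta_X\ge0$. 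The limit statement also follows, since $(2^n-2)/2^n\to1$ and for qubits with fixed Hamiltonians each $\Delta_X$ is bounded uniformly.

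The lower bound $\Delta^{[2]}_{\min}\le M^{(s)}_E$ is where I expect trouble. Using the display above, it reduces to $\Delta^{[2]}_{\min}\le \frac{2^n-2}{2^n}\Delta^{[2]}_{\rm avg}$. A minimum is at most an average, but the prefactor is strictly below $1$. When all nontrivial $\Delta_X$ are equal and positive, for example a GHZ-type state with symmetric Hamiltonians, the inequality fails as literally stated. I would therefore first check the normalization intended by the paper. The natural repairs are to read $\Delta^{[2]}_{\min}$ as a minimum over all of $\mathcal{P}(s)$, including $\emptyset$, which gives $0$ and makes the claim trivial, or to prove the weakened form $\frac{2^n-2}{2^n}\Delta^{[2]}_{\min}\le M^{(s)}_E$. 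Both come from the same averaging argument, and I would present the bound in whichever form survives this check.

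For (2), set $H_i=\ket{1}\bra{1}$. Then $\tr(\rho^p_{A_i}H_{A_i})=\lambda_{\min}(\rho_{A_i})$. For a bipartition $X|X^c$ the composite Hamiltonian is $H_X=\sum_{i\in X}\ket{1}\bra{1}_i$, whose spectrum counts excitations.

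\textbf{Upper bound.} I would show $\Delta_X\le\sum_i\lambda_{\min}(\rho_{A_i})$ for each $X$. One route is the subadditivity of passive energy for pure states, namely that the passive energy of $\rho_X$ is at most $\sum_{i\in X}\lambda_{\min}(\rho_{A_i})$. I expect this to hold because locally rotating each qubit into its passive basis gives a unitary on $X$ with energy $\sum_{i\in X}\lambda_{\min}(\rho_{A_i})$, and the passive energy is the minimum over unitaries. Adding the bounds for $X$ and $X^c$ gives $\Delta_X\le \Delta^{(n)}_{A_1|\cdots|A_n}$, and averaging over $\mathcal{P}(s)$ with the factor $(2^n-2)/2^n\le1$ gives the result.

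\textbf{Lower bound.} I would apply AM--GM to the $2^n-2$ nonzero terms:
\begin{equation}
M^{(s)}_E=\frac{2^n-2}{2^n}\,\Delta^{[2]}_{\rm avg}\ge \frac{2^n-2}{2^n}\,\Delta^{[2]}_{\rm V}.
\end{equation}
To reach the stated form $2^{-(2^n-1+n)}(\Delta^{[2]}_{\rm V})^{2^n-2}$, I would use $\Delta_X\le 2$. For qubits with this Hamiltonian, each single-qubit passive energy is at most $1/2$; the relevant bound is that $\Delta_X$ is at most the number of qubits on the smaller side of the cut. This gives $\Delta^{[2]}_{\rm V}\le c$ for an explicit $c$, so $(\Delta^{[2]}_{\rm V})^{2^n-2}\le c^{2^n-3}\,\Delta^{[2]}_{\rm V}$. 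The remaining step is to check that $c^{2^n-3}/2^{2^n-1+n}\le (2^n-2)/2^n$. The constant is very loose, so this should hold comfortably, and the only real work is bookkeeping the crude bound on $\Delta_X$. The main obstacle overall remains the normalization issue in the lower bound of part (1).
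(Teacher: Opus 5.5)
Your upper bounds are fine. Your local-unitary proof of $\Delta_{X|X^c}\le\Delta^{(n)}_{A_1|\cdots|A_n}$ is a self-contained replacement for the result the paper cites. Your diagnosis of the lower bound in (1) is also correct. With $\Delta^{[2]}_{\min}$ taken over the $2^n-2$ nontrivial cuts, the claim fails whenever those gaps are all equal. The paper's proof only appears to work because it applies $\Delta_{X|X^c}\ge\Delta^{[2]}_{\min}$ to every $X\in\mathcal{P}(s)$, including $X=\emptyset,[n]$, where the gap is $0$. Its own Table II even lists $M^{(s)}_E=0.5<0.667=\Delta^{[2]}_{\min}$ for $\frac{1}{\sqrt3}\ket{000}+\sqrt{2/3}\,\ket{111}$. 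So $\frac{2^n-2}{2^n}\Delta^{[2]}_{\min}\le M^{(s)}_E$ is the version that can actually be proved. One small correction for the limit: $\Delta_{X|X^c}$ is not bounded uniformly in $n$. By your own bound it is at most $\Delta^{(n)}_{A_1|\cdots|A_n}$, which grows at most linearly in $n$. So $\Delta^{[2]}_{\rm avg}-M^{(s)}_E=2^{1-n}\Delta^{[2]}_{\rm avg}\to0$ still holds.

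The genuine gap is the lower bound in (2). Your step $\Delta_{X|X^c}\le2$ is false once $n\ge6$, and it contradicts the bound in your next sentence. That bound, $\Delta_{X|X^c}\le\min(|X|,n-|X|)$, is correct and attained: three Bell pairs across a $3|3$ cut give $\Delta=3$. The final constant check is not loose either. The condition $c^{2^n-3}\le(2^n-2)\,2^{2^n-1}$ essentially forces $c\le2$; with $c=3$ and $n=6$ it fails by a factor of order $10^8$.

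No better bookkeeping can fix this, because the inequality itself fails for large $n$. The left side equals $(\Delta^{[2]}_{\rm V}/2)^{2^n-2}/2^{n+1}$, whereas $M^{(s)}_E\le\Delta^{(n)}_{A_1|\cdots|A_n}\le n/2$. For highly entangled states (Haar-typical or random graph states), most cuts have $\Delta_{X|X^c}$ of order $\min(|X|,n-|X|)$. Hence $\Delta^{[2]}_{\rm V}$ grows linearly in $n$, and the left side grows at least doubly exponentially.

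The paper obtains its bound only through the identity $\Delta_{X|X^c}=2\lambda^X_{\min}$, which would force $\Delta_{X|X^c}\le1$. That identity holds when one side is a single qubit, or when the marginals have rank $2$. It fails in general for the additive Hamiltonian $H_X=\sum_{i\in X}\ket{1}\bra{1}_i$. For example, the $AC|BD$ cut of $(\cos\theta\ket{00}+\sin\theta\ket{11})^{\otimes2}$ has gap $4\sin^2\theta$ (for $\theta\le\pi/4$), not $2\sin^4\theta$.

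What you can legitimately keep is your AM--GM bound $M^{(s)}_E\ge\frac{2^n-2}{2^n}\Delta^{[2]}_{\rm V}$, which holds for any local Hamiltonians. It implies the stated form whenever $\Delta^{[2]}_{\rm V}\le2$, in particular for $n\le5$.
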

\begin{proof}
(1) For $M^{(s)}_E(\ket{\psi})$, we have that
\begin{eqnarray}
 M^{(s)}_E(\ket{\psi})&=&\frac{1}{2^{|s|}}\sum_{X\in \mathcal{P}(s)}\Delta_{X|X^c}(\ket{\psi}) \nonumber\\
 &\overset{(a)}{\geq}& \frac{1}{2^{|s|}}\sum_{X\in \mathcal{P}(s)}\Big(\min_{X\in \mathcal{P}(s)}\{\Delta_{X|X^c}(\ket{\psi})\}\Big)\nonumber\\
 &=& \frac{1}{2^{|s|}}\sum_{X\in \mathcal{P}(s)}\Delta^{[2]}_{\min}(\ket{\psi})\nonumber\\
 &=& \Delta^{[2]}_{\min}(\ket{\psi}),
\end{eqnarray}
where the inequality (a) is due to  $\Delta_{X|X^c}(\ket{\psi})\geq \min_{X\in \mathcal{P}(s)} \{\Delta_{X|X^c}(\ket{\psi})\}$.

\begin{eqnarray}
    M^{(s)}_E(\ket{\psi})&=&\frac{1}{2^{|s|}}\sum_{X\in \mathcal{P}(s)}\Delta_{X|X^c}(\ket{\psi}) \nonumber\\
    &=&\frac{2^n-2}{2^{|s|}}\Delta^{[2]}_{\rm avg}(\ket{\psi}) \nonumber\\
    &=&(1-\frac{1}{2^{|s|-1}})\Delta^{[2]}_{\rm avg}(\ket{\psi}) \nonumber\\
    &\overset{(b)}{\leq}& \Delta^{[2]}_{\rm avg}(\ket{\psi}),
\end{eqnarray}
where (b) holds because $s=[n]$ implies $2^n=2^{|s|}$.  Moreover, as $n$ approaches infinity, $M^{(s)}_E(\ket{\psi})= \Delta^{[2]}_{\rm avg}(\ket{\psi})$.

(2) Since $\Delta^{[2]}_{\rm V}(\ket{\psi})=\Big(\prod_{X\in \mathcal{P}(s)}\Delta_{X|X^c}(\ket{\psi})\Big)^{\frac{1}{2^n-2}}$, thus we have that
\begin{eqnarray}
    && \log_2(\Delta^{[2]}_{\rm V}(\ket{\psi}))\nonumber\\
    &=&\frac{1}{2^n-2}\sum_{X\in\mathcal{P}(s)}\log_2(\Delta_{X|X^c}(\ket{\psi})\nonumber\\
    &\overset{(c)}{=}&\frac{1}{2^n-2}\sum_{X\in\mathcal{P}(s)}\log_2(2\lambda^X_{\min})\nonumber\\
    &=&\frac{1}{2^n-2}(2^n+\sum_{X\in\mathcal{P}(s)}\log_2(\lambda^X_{\min}))\nonumber\nonumber\\
    &\overset{(d)}{\leq}&\frac{1}{2^n-2}(2^n+\sum_{X\in \mathcal{P}(s)}p_X\log_2(\lambda^X_{\min}))\nonumber\\
    &\overset{(e)}{\leq}&\frac{1}{2^n-2}(2^n+\log_2(\sum_{X\in \mathcal{P}(s)}p_X\lambda^X_{\min}))\nonumber\\
    &\overset{(f)}{\leq}&\frac{1}{2^n-2}(2^n+\log_2(\sum_{X\in \mathcal{P}(s)}\lambda^X_{\min}))\nonumber\\
    &=&\frac{1}{2^n-2}(2^n-1+\log_2(\sum_{X\in \mathcal{P}(s)}\Delta_{X|X^c}(\ket{\psi}))\nonumber\\
    &=&\frac{1}{2^n-2}\log_2(2^{2^n-1+n}M^{(s)}_E(\ket{\psi})),
\end{eqnarray}
where the validity of $(c)$ follows from $\Delta_{X|X^c}(\ket{\psi})=2\lambda^X_{\min}$ for the given local Hamiltonian $H_i=\ket{1}\bra{1}$ for every system, and $\lambda^X_{\min} (0 <\lambda^X_{\min}< 1)$ is the minimal eigenvalue of the quantum state $\rho_X=\tr_{X^c}(\ket{\psi}\bra{\psi})$; $(d)$ holds follows that $\log_2(\lambda^X_{\min})<0$ and $p_X$ is a probability value and $\sum_{X\in \mathcal{P}(s)}p_X=1$; the correctness of $(e)$ comes from the concavity of the logarithmic function $\log_2(\cdot)$; $(f)$ holds follows from $p_X\lambda^X_{\min}\leq \lambda^X_{\min}$.

Therefore, we can obtain that
\begin{equation}
    M^{(s)}_E(\ket{\psi})\geq \frac{1}{2^{2^n-1+n}}(\Delta^{[2]}_{\rm V}(\ket{\psi}))^{2^n-2}.
\end{equation}

Since $\Delta_{X|X^c}(\ket{\psi})\leq \Delta_{A_1|\cdots|A_n}^{(n)}(\ket{\psi})$ \cite{puliyil2022thermodynamic}, then we have that
$$ M^{(s)}_E(\ket{\psi}) \leq \Delta_{A_1|\cdots|A_n}^{(n)}(\ket{\psi}).$$
\end{proof}

\textbf{Remark:} Fully separable ergotropic gap cannot capture genuine entanglement as it can take non-zero value for a non genuine entangled states. For instance, considering a four qubit system and each having the  Hamiltonian $H_i=\ket{1}\bra{1}$ and $s=[4]$, we give the following quantum states:   $\ket{\Psi}_{ABCD}=\cos{\theta}\ket{0000}+\sin{\theta}\ket{1111}$ and $\ket{\Phi}_{ABCD}=\cos^2{\theta}\ket{0000}+\cos{\theta}\sin{\theta}\ket{0011}+\cos{\theta}\sin{\theta}\ket{1100}+\sin^2{\theta}\ket{1111}$. Thus,
\begin{eqnarray}
  \Delta_{A|B|C|D}(\ket{\Psi}) &=& \Delta_{A|B|C|D}(\ket{\Phi}) \nonumber\\
   &=& \begin{cases}
\displaystyle 4\sin^2{\theta}, & \text{if } 0\leq\theta\leq \frac{\pi}{4}, \\
\displaystyle 4\cos^2{\theta}, & \text{if } \frac{\pi}{4}<\theta\leq \frac{\pi}{2}.
\end{cases}
\end{eqnarray}
However, we can obtain that
\begin{equation}
M^{(s)}_E(\ket{\Psi})=
\begin{cases}
\displaystyle \frac{7}{4}\sin^2{\theta}, & \text{if } 0\leq\theta\leq \frac{\pi}{4}, \\
\\
\displaystyle \frac{7}{4}\cos^2{\theta}, & \text{if } \frac{\pi}{4}<\theta\leq \frac{\pi}{2}.
\end{cases}
\end{equation}
\begin{equation}
M^{(s)}_E(\ket{\Phi})=
\begin{cases}
\displaystyle \sin^2{\theta}+\frac{1}{2}\sin^4{\theta}, & \text{if } 0\leq\theta\leq \frac{\pi}{4}, \\
\\
\displaystyle \cos^2{\theta}+\frac{1}{2}\cos^4{\theta}, & \text{if } \frac{\pi}{4}<\theta\leq \frac{\pi}{2}.
\end{cases}
\end{equation}
Therefore, these two states can be completely distinguished by ergotropic-gap concentratable entanglement $M^{(s)}_E$, as shown in Fig.\ref{fig:FullyeVsEG}.
\begin{figure}[ht]
\centering
    \includegraphics[width=1.05\linewidth]{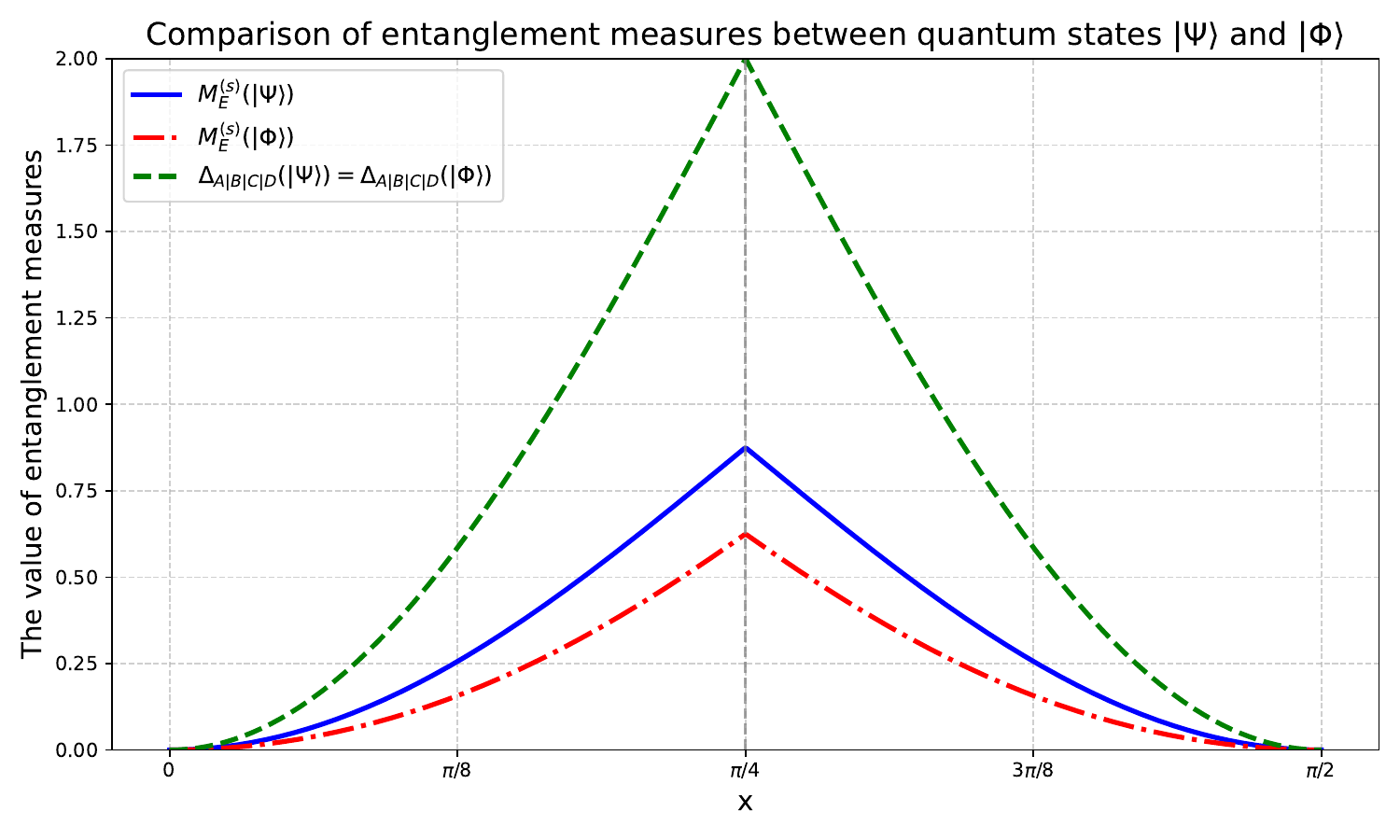}
     \caption{Comparison of entanglement measures between quantum states $\ket{\Psi}$ and $\ket{\Phi}$.}
    \label{fig:FullyeVsEG}
\end{figure}
\section{Example} \label{sec:example}
In this section, we present several examples of ergotropic-gap concentratable entanglement for the most general three qubit pure state. Furthermore, we consider the entanglement for
\(n\)-qubit GHZ and W states. Furthermore, we then investigate the ergotropic-gap  concentratable entanglement in a four-partite star quantum network.

\subsection{The simplest multipartite system $(\mathbb{C}^2)^{\otimes3}$}
Let us look at the simplest multipartite system---the three-qubit system---with ground and excited state energies, 0 and 1 respectively for all the parties. The most general three qubit pure state in the generalised Schmidt form is given by \cite{acin2000generalized},
\begin{equation}
    \label{eq:exthreequbit}
\ket{\psi}_{ABC}=\lambda_0\ket{000}+\lambda_1e^{\mathbf{i}\varphi}\ket{100}+\lambda_2\ket{101}+\lambda_3\ket{110}+\lambda_4\ket{111},
\end{equation}
where $\lambda_i\geq0$, $\sum_i\lambda_i^2=1$ and $0\leq \varphi\leq \pi$.

By definition, the spectrum of each single marginal will be the same with the remaining two-party marginal. Since the Hamiltonian for the marginal systems are $H_A = H_B = H_C = \ket{1}\bra{1}$, passive state energy will be equal to the smallest eigenvalue. In terms of the generalised Schmidt coefficients, the passive state energies are denoted by
   \begin{eqnarray*}
   && \tr(\rho_A^pH_A)=\frac{1}{2}\Big(1-\sqrt{1-4\lambda_0^2(1-\lambda_0^2-\lambda_1^2)}\Big)=\frac{\Delta_{A|BC}}{2},\\
    && \tr(\rho_B^pH_B)=\frac{1}{2}\Big(1-\sqrt{1-4(\lambda_0^2(\lambda_3^2+\lambda_4^2)+\alpha)}\Big)=\frac{\Delta_{B|AC}}{2},\\
    && \tr(\rho_C^pH_C)=\frac{1}{2}\Big(1-\sqrt{1-4(\lambda_0^2(\lambda_2^2+\lambda_4^2)+\alpha)}\Big)=\frac{\Delta_{C|AB}}{2},
\end{eqnarray*}
where $\alpha=|\lambda_1\lambda_4e^{\mathbf{i}\varphi}-\lambda_2\lambda_3|^2$ and $\Delta_{i|jk}$ is the bi-separable ergotropic gap.

The explicit expressions of the bi-separable ergotropic gaps for a generic three qubit state (\ref{eq:exthreequbit}) are given. Setting different coefficients equal to zero, one can get different classes of three qubit states such as generalised GHZ states, tri-Bell states and extended GHZ etc \cite{acin2000generalized}. With those conditions, in Table \ref{tab:eg_label} we analyze the ergotropic-gap concentratable entanglement
 for different classes of states.
 \begin{table*}[htbp]
  \centering
    \caption{The ergotropic-gap concentratable entanglement of three qubit pure states with two or less parameters}
    \begin{tabular}{c|c|c}
        \hline
        Type of State & Constraint & entanglement $M^{(s)}_E$  \\
        \hline
        Generalised GHZ & $\lambda^2_0\leq0.5$ & $1.5\lambda^2_0$ \\
        ($\lambda_0\lambda_4\neq 0=\lambda_i, i\in\{1,2,3\}$) & $\lambda^2_0\geq0.5$ & $1.5(1-\lambda^2_0)$ \\
        \hline
         & $\lambda^2_0\geq0.5$ & $1-\lambda_0^2$ \\
        Tri-Bell & $\lambda^2_3\geq0.5$ & $1-\lambda_3^2$ \\
        ($\lambda_4=\lambda_1=0$) & $\lambda^2_2\geq0.5$ & $1-\lambda_2^2$ \\
          & $\lambda^2_0,\lambda^2_2,\lambda^2_3\leq0.5$ & 0.5 \\
        \hline
         & $\lambda_1\neq0,\lambda_4^2\leq 0.5$ & $0.25(1-\sqrt{1-4\lambda^2_0\lambda_4^2}+4\lambda_4^2)$ \\
         & $\lambda_1\neq0,\lambda_4^2\geq 0.5$ & $0.25(1-\sqrt{1-4\lambda^2_0\lambda_4^2}+4(1-\lambda_4^2))$ \\
        Extended GHZ & $\lambda_2\neq0,\lambda_0^2\leq 0.5$ & $0.25(1-\sqrt{1-4\lambda^2_0\lambda_4^2}+4\lambda_0^2)$ \\
($\lambda_i=\lambda_j=0\neq\lambda_k,i,j,k\in\{1,2,3\}$) & $\lambda_2\neq0,\lambda_0^2\geq 0.5$ & $0.25(1-\sqrt{1-4\lambda^2_0\lambda_4^2}+4(1-\lambda_0^2))$  \\
         & $\lambda_3\neq0,\lambda_0^2\leq 0.5$ & $0.25(1-\sqrt{1-4\lambda^2_0\lambda_4^2}+4\lambda_0^2)$ \\
         & $\lambda_3\neq0,\lambda_0^2\geq 0.5$& $0.25(1-\sqrt{1-4\lambda^2_0\lambda_4^2}+4(1-\lambda_0^2))$  \\
        \hline
    \end{tabular}
    \label{tab:eg_label}
\end{table*}

Furthermore, we consider the inequivalence among different entanglement measures. Finding inequivalent monotones is quite important as it helps us to comment on the state inter-convertibility of the different multipartite entangled states. As the entanglement measures are LOCC monotone,  $E(\ket{\psi})>E(\ket{\phi})$ for an entangled pair $\ket{\psi}$ and $\ket{\phi}$ implies abandoned transformation $\ket{\phi}\rightarrow\ket{\psi}$ under any LOCC. Two measures $E$ and $E'$ are called inequivalent if there exist states $\ket{\alpha}$ and $\ket{\beta}$ such that $E(\ket{\alpha})>E(\ket{\beta})$ whereas $E'(\ket{\alpha})<E'(\ket{\beta})$ which immediately says neither one can be converted to each other \cite{puliyil2022thermodynamic}. We have compared our proposed measure with minimum bi-separable erogotropic gap $\Delta_{\min}^{[2]}$, bi-separable average erogotropic gap $\Delta_{\rm avg}^{[2]}$, 2-erogotropic volume $\Delta_{\rm V}^{[2]}$ and ergotropic fill $\Delta^{[2]}_F$ \cite{puliyil2022thermodynamic}.
\begin{table*}[htbp]
\centering
\caption{ In equivalence among the different entanglement measures.}
\label{tab:equivalence}
\begin{tabular}{l c c c c c}
\hline
State & $M_{E}^{(s)}$&$\Delta_{\min}^{{[2]}}$ & $\Delta_{\rm avg}^{[2]}$ & $\Delta_{\rm F}^{[2]}$  & $\Delta_{\rm V}^{[2]}$ \\
\hline
$|\psi\rangle_{ABC} = \frac{1}{\sqrt{3}}|000\rangle + \sqrt{\frac{2}{3}}|111\rangle$& 0.5
    & 0.667 & 0.667 & 0.667  & 0.667 \\
$|\phi\rangle_{ABC} = \frac{1}{\sqrt{2}}|000\rangle + \sqrt{\frac{9}{32}}|110\rangle + \sqrt{\frac{7}{32}}|111\rangle$ & 0.562
    & 0.250 & 0.750 & 0.559  & {0.630} \\ 
$|\chi\rangle_{ABC} = \frac{1}{\sqrt{8}}|000\rangle + \sqrt{\frac{3}{2}}|111\rangle$&0.375
    & {0.500} & 0.500 & 0.500  & 0.500 \\ 
$|\zeta\rangle_{ABC} = \sqrt{\frac{3}{8}}|000\rangle + \sqrt{\frac{1}{3}}|110\rangle + \sqrt{\frac{7}{24}}|111\rangle$ &0.438
    & 0.250 & 0.583 & 0.479  & 0.520 \\
$|\eta\rangle_{ABC} = \frac{3}{\sqrt{50}}|000\rangle + \sqrt{\frac{41}{50}}|111\rangle$ & 0.270
    & 0.360 & 0.360 & 0.360  & 0.360 \\
\hline
\end{tabular}
\end{table*}

Firstly, we explicitly show that the ergotropic-gap  concentratable entanglement \(M_{E}^{(s)}\) and the bi-separable average erogotropic gap $\Delta_{\rm avg}^{[2]}$ are equivalent due to
$$M^{(s)}_E(\ket{\psi})
    =(1-\frac{1}{2^{|s|-1}})\Delta^{[2]}_{\rm avg}(\ket{\psi}).$$
Thus, they will always maintain the same order for any pair of states while other measures are independent with each other (example of pair of states are shown in Table \ref{tab:equivalence}).

Furthermore, we have that
\begin{itemize}
    \item \{\(M_{E}^{(s)}\), \(\Delta_{\min}^{{[2]}}\)\} show different order for the pair of states $\{\ket{\psi}_{ABC},\ket{\phi}_{ABC}\}$.
    \item \{\(M_{E}^{(s)}\), \(\Delta_{\rm F}^{[2]}\)\} show different order for the pair of states $\{\ket{\psi}_{ABC},\ket{\phi}_{ABC}\}$.
    \item \{\(M_{E}^{(s)}\), \(\Delta_{\rm V}^{[2]}\)\} show different order for the pair of states $\{\ket{\psi}_{ABC},\ket{\phi}_{ABC}\}$.
\end{itemize}
Thus, $M_{E}^{(s)}/\Delta_{\rm avg}^{[2]}$, $\Delta_{\min}^{[2]}$,  $\Delta_{\rm V}^{[2]}$,and $\Delta^{[2]}_{\rm F}$ are inequivalent measures.
\subsection{\texorpdfstring{$n$-qubit GHZ states and W states}{n-qubit GHZ states and W states}}
Under the framework of reversible LOCC, GHZ and W states represent maximally entangled states within their respective equivalence classes. While GHZ states generally exhibit a higher degree of entanglement as quantified by standard pure-state measures, W states are distinguished by their enhanced robustness: entanglement persists within the remaining subsystem even upon the measurement or loss of constituent qubits. In this section, we focus on these two types of entangled states due to their significant applications in quantum computing \cite{horodecki2009quantum}.

Firstly, we will give several illustrative cases of  ergotropic-gap concentratable entanglement \(M_{E}^{(s)}\) for $n$-qubit GHZ states and W states. For \( n \geq 3 \), the \( n \)-qubit GHZ and W states, expressed in the computational basis, are given by
\begin{equation}
\ket{\text{GHZ}_n} = \frac{1}{\sqrt{2}} \left( \ket{0}^{\otimes n} + \ket{1}^{\otimes n} \right),
\end{equation}

\begin{equation}
\ket{W_n} = \frac{1}{\sqrt{n}} \sum_{i=1}^{n} \ket{0 \dots 1_i \dots 0}.
\end{equation}

As shown in Fig.~\ref{fig:GHZvsW-Df}, we plot the difference \(\Delta = M_{E}^{(s)}(\ket{\text{GHZ}_n}) - M_{E}^{(s)}(\ket{W_n})\) for various subsystem sizes \(|s|\).
Meanwhile, for \(s = [n]\), we have
\[M^{(s)}_E(\ket{\text{GHZ}_n}) = 1 - \frac{1}{2^{n-1}}\]
and
\[
M^{(s)}_E(\ket{W}_n) = 1-\frac{1}{2^{n-1}} \binom{n-1}{\lfloor\frac{n-1}{2}\rfloor}.
\]
For \(s = [n-1]\), we have
\[
M^{(s)'}_E(\ket{\text{GHZ}}_n) = 1 - \frac{1}{2^{n-1}}
\]
and
\[
M^{(s)'}_E(\ket{W}_n) = \frac{1}{2}-\frac{1}{2^{n}} \binom{n-1}{\lfloor\frac{n-1}{2}\rfloor},
\]
where $\lfloor x\rfloor$ represents the floor function.

\begin{figure}
    \centering
    \includegraphics[width=1.05\linewidth]{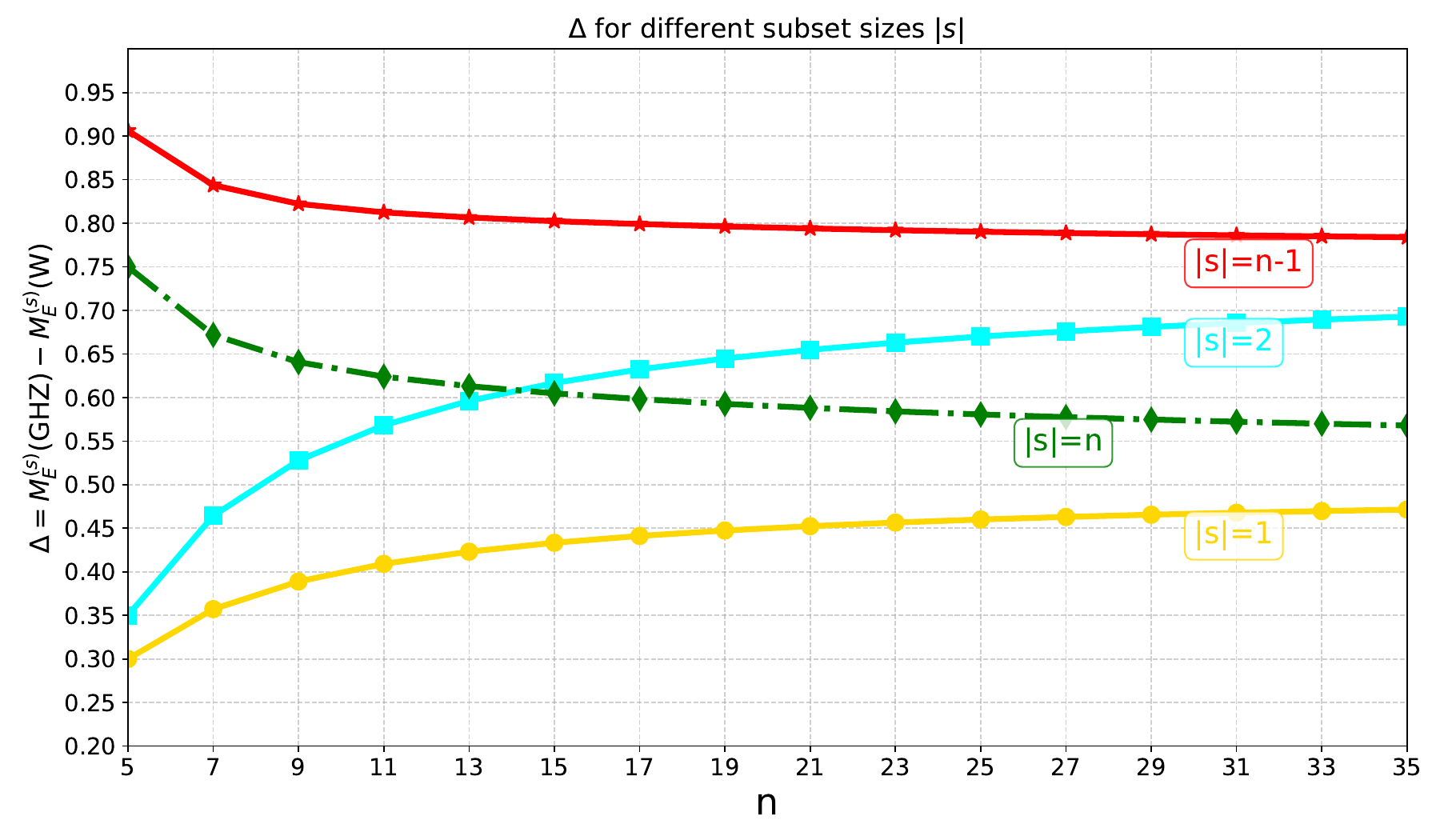}
    \caption{The difference \(\Delta=M^{(s)}_E(\ket{\text{GHZ}}_n) - M^{(s)}_E(\ket{W}_n)\) for various subsystem sizes \(|s|\).}
    \label{fig:GHZvsW-Df}
\end{figure}

\begin{figure}[htbp]
    \centering
    \includegraphics[width=1.05\linewidth]{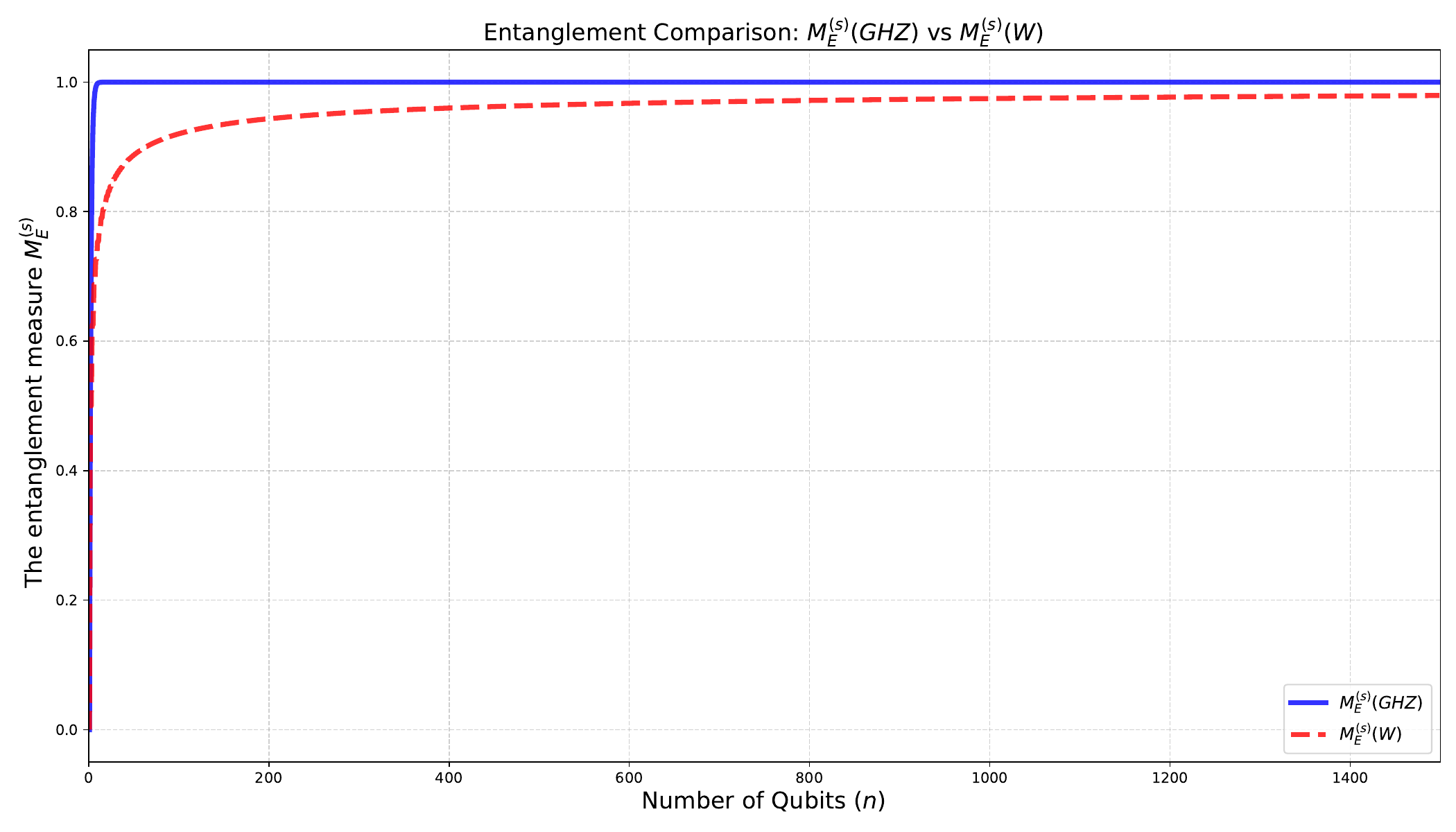}
    \caption{The ergotropic-gap concentratable entanglements serves
as an effective measure for distinguishing GHZ states
from W states.}
    \label{fig:GHZvsWEG}
\end{figure}
As shown in Fig.~\ref{fig:GHZvsWEG}, both \(M^{(s)}_{E}(\ket{\text{GHZ}_n})\) and \(M^{(s)}_E(\ket{W_n})\) increase slowly and gradually approach 1 as \(n\) increases. These results show that for large $n$, the GHZ and W states can be effectively distinguished using small subsystems, i.e., when the subsystem size
\(|s|\) is small. In both cases, we observe that \(M^{(s)}_E(\ket{\text{GHZ}_n}) > M^{(s)}_E(\ket{W_n})\), indicating that \(M^{(s)}_E\) is an effective measure for distinguishing GHZ states from W states.
\subsection{A four-partite star quantum network}\label{sec:starnetwork}
Consider following quantum state
\begin{equation}
    \ket{\psi}=
(\cos{\theta}\ket{00}+\sin{\theta}\ket{11})^{\otimes 3}
\end{equation}
It can be seen that this quantum state is composed of the tensor product of three EPR-like states. When grouping the first qubit from each pair into subsystem $A$, the state $\ket{\psi}$ represents a four-partite pure state in the Hilbert space \(\mathcal{H}_A\otimes\mathcal{H}_B\otimes\mathcal{H}_C\otimes\mathcal{H}_D\)
 with dimension \(8 \otimes 2 \otimes 2 \otimes 2\). Thus, we have
\begin{eqnarray}
\ket{\psi}_{ABCD}
&=&
\cos^3{\theta} \ket{0000}+\cos^2{\theta}\sin{\theta}\ket{1001} \nonumber\\
&&+ \cos^2{\theta}\sin{\theta} \ket{2010} + \cos{\theta}\sin^2{\theta} \ket{3011} \nonumber \\
&& + \cos^2{\theta}\sin{\theta}\ket{4100} + \cos{\theta}\sin^2{\theta} \ket{5101} \nonumber\\
&&+ \cos{\theta}\sin^2{\theta} \ket{6110} +\sin^3{\theta} \ket{7111}.
\end{eqnarray}
 For \(s = [n]\) and the local Hamiltonian $H_A =\ket{7}\bra{7}$ for the first subsystem $A$ and local Hamiltonian $H_X =\ket{1}\bra{1}$ for the subsystem $X\in\{B,C,D\}$, the ergotropic-gap  concentratable entanglement \(M_{E}^{(s)}(\ket{\psi})\) can be expressed as
\begin{eqnarray}
\label{eq:starnetwork}
 M_{E}^{(s)}(\ket{\psi})&=& \frac{1}{8} (\Delta_{A|BCD}(\ket{\psi}) + \Delta_{B|ACD}(\ket{\psi})\nonumber\\
 &&+\Delta_{C|ABD}(\ket{\psi}) +\Delta_{D|ABC}(\ket{\psi})\nonumber\\ &&+\Delta_{AB|CD}(\ket{\psi})+\Delta_{AC|BD}(\ket{\psi})\nonumber\\
 &&+\Delta_{AD|BC}(\ket{\psi})\\
&=& \frac{1}{4}(7\sin^6{\theta}+3\cos^4{\theta}\sin^4{\theta}+6\cos^2{\theta}\sin^4{\theta}).\nonumber
\end{eqnarray}

We conduct a comparative study of entanglement behavior in the four-partite star quantum network using four entanglement measures: minimum bi-separable erogotropic gap $\Delta_{\min}^{[2]}$, bi-separable average erogotropic gap $\Delta_{\rm avg}^{[2]}$, 2-erogotropic volume $\Delta_{\rm V}^{[2]}$, and the ergotropic-gap concentratable entanglement $M_{E}^{(s)}$. As shown in Fig.~\ref{fig:Starnetwork}, $M_{E}^{(s)}$ exhibits superior sensitivity to
\(\theta\)-dependent entanglement variations in  \(\ket{\psi(\theta)}\) compared to $\Delta_{\rm V}^{[2]}$. Specifically,  $M_{E}^{(s)}$ is better than $\Delta_{\min}^{[2]}$ to distinguish quantum states \(\ket{\psi(\theta)}\) with $\theta \notin(1.196,1.946)\cup(4.338,5.087)$. For the system size $n=4$  particles, our proposed entanglement measure $M_{E}^{(s)}$ exhibits lower distinguishing capability than $\Delta_{\rm avg}^{[2]}$. However, as the number of particles increases, our measures approach the performance of $\Delta_{\rm avg}^{[2]}$ in state discrimination.
\begin{figure}[ht]
\centering
    \includegraphics[width=1.05\linewidth]{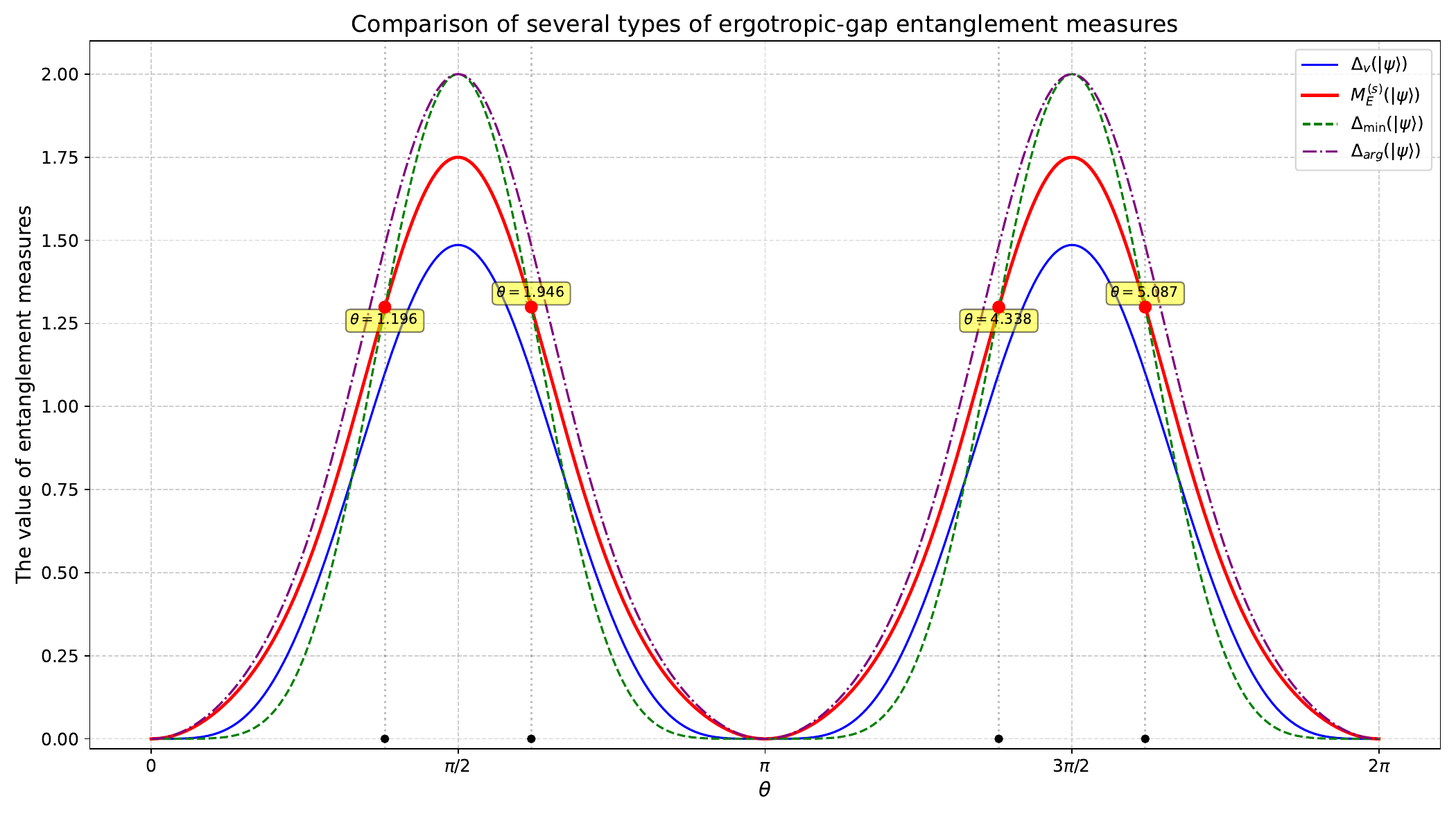}
     \caption{Comparison of  minimum bi-separable erogotropic gap $\Delta_{\min}^{[2]}$, bi-separable average erogotropic gap $\Delta_{\rm avg}^{[2]}$, 2-erogotropic volume $\Delta_{\rm V}^{[2]}$, and the ergotropic-gap concentratable entanglement $M_{E}^{(s)}$.}
    \label{fig:Starnetwork}
\end{figure}
\section{Conclusion}\label{sec:Conclusion}
In this work, we have established a novel thermodynamic framework for quantifying  multipartite entanglement. By leveraging fundamental thermodynamic quantities, we introduced two new families of entanglement measures: the ergotropic-gap concentratable entanglement \(M_E^{(s)}\) and the battery capacity-gap concentratable entanglement \(M_B^{(s)}\). These measures quantify the distribution and concentration of entanglement resources across arbitrary subsets of qubits within a multipartite quantum system. We shown that $M_E^{(s)}$ and $M_B^{(s)}$ are equivalent for systems governed by Hamiltonians featuring equispaced energy levels, demonstrating a fundamental link between these thermodynamic quantities in physically relevant settings. Furthermore, we established that the ergotropic-gap concentratable entanglement $M_E^{(s)}$ constitutes a well-defined multipartite entanglement measure, and we provided rigorous proofs that it satisfies crucial properties expected of a valid entanglement quantifier: non-negativity, monotonicity under LOCC on average, continuity with respect to state perturbations, and majorization monotonicity.

Beyond its axiomatic foundations, we demonstrated the practical utility of the ergotropic-gap concentratable entanglement. We derived a sufficient criterion based on $M_E^{(s)}$ for detecting genuine multipartite entanglement in three-qubit systems. We also explored the monogamy relations satisfied by $M_E^{(s)}$ and established its relationships with other prominent thermodynamic entanglement measures, such as the minimum bi-separable ergotropic gap, the bi-separable average ergotropic gap, the 2-ergotropic volume, and the fully separable ergotropic gap, highlighting its distinct role within the landscape of entanglement quantifiers.

The effectiveness of $M_E^{(s)}$ was concretely demonstrated through its application to paradigmatic multipartite entangled states. Notably, it provides a clear and quantifiable distinction between the fundamentally inequivalent entanglement structures of multi-qubit GHZ states and W states, reflecting their differing resource properties under thermodynamic considerations. Furthermore, we investigated entanglement in a specific four-partite star-shaped quantum network configuration. Intriguingly, our analysis revealed that in certain parameter regimes, the ergotropic-gap concentratable entanglement of the global network state exceeds the value associated with the original bipartite entangled pairs used to construct the network, suggesting a form of entanglement concentration facilitated by the network structure when viewed thermodynamically.

In summary, this work bridges quantum thermodynamics and entanglement theory by introducing operationally meaningful concentratable entanglement measures derived from work extraction capabilities. The ergotropic-gap and battery capacity-gap concentratable entanglements offer versatile tools for analyzing the distribution and quantification of entanglement in complex many-body systems and quantum networks. In future experiments, we hope to detect and characterize multi-body entanglement through thermodynamic methods \cite{niu2024experimental,joshi2024experimental} .
\section*{Acknowledgments}
The authors are very grateful to anonymous reviewers for constructive comments that have greatly helped to improve the quality of this paper. This research was supported by the Nation Natural Science Foundation of China under Grant No.12301590 and Science Research Project of Hebei Education Department under Grant No.BJ2025061.
\bibliographystyle{unsrt}
\bibliography{main}
\fussy
\end{document}